\newtheorem{theorem}{Theorem}
\newcommand{\proba}{\mathbb{P}}
\newcommand{\bT}{\mathbf{T}}
\newcommand{\bM}{\mathbf{M}}
\newcommand{\bY}{\mathbf{y}}
\newcommand{\bX}{\mathbf{X}}
\newcommand{\bTtilde}{\mathbf{\tilde{T}}}
\newcommand{\bMhat}{\mathbf{\hat{M}}}
\newcommand{\bYhat}{\mathbf{\hat{y}}}
\newcommand{\balpha}{\boldsymbol{\alpha}}
\newcommand{\bbeta}{\boldsymbol{\beta}}
\newcommand{\bgamma}{\boldsymbol{\gamma}}
\newcommand{\bpsi}{\boldsymbol{\psi}}
\newcommand{\bxi}{\boldsymbol{\xi}}
\newcommand{\prox}{\mbox{Prox}}
\begin{document}

\title[Group lasso based selection for high-dimensional mediation analysis]{Group lasso based selection for high-dimensional mediation analysis\footnote{This manuscript has been accepted for publication in \textit{Statistics in Medicine}.}}

\author[1,2]{\fnm{Allan} \sur{Jérolon}}\email{allan.jerolon@chu-guadeloupe.fr}

\author[2]{\fnm{Flora} \sur{Alarcon}}\email{flora.alarcon@u-paris.fr}


\author[3]{\fnm{Florence}\sur{Pittion}}\email{florence.pittion@univ-grenoble-alpes.fr}

\author[3]{\fnm{Magali}\sur{Richard}}\email{magali.richard@univ-grenoble-alpes.fr}

\author[3]{\fnm{Olivier}\sur{François}}\email{olivier.francois@univ-grenoble-alpes.fr}

\author[4]{\fnm{Etienne} \sur{Birmelé}}\email{etienne.birmele@unistra.fr}
\equalcont{These authors contributed equally to this work.}

\author[2]{\fnm{Vittorio} \sur{Perduca}}\email{vittorio.perduca@u-paris.fr}
\equalcont{These authors contributed equally to this work.}

\affil[1]{Centre d'Investigation Clinique Antilles Guyane, Inserm CIC 1424, CHU de
Guadeloupe, Les Abymes, Guadeloupe, France}

\affil[2]{Université Paris Cité, CNRS, MAP5, F-75006 Paris, France}

\affil[3]{Univ. Grenoble Alpes, CNRS, UMR 5525, VetAgro Sup, Grenoble INP, TIMC, 38000 Grenoble, France}

\affil[4]{Institut de Recherche Mathématique Avancée, UMR 7501 Université de Strasbourg et CNRS, 
7 rue René-Descartes, 67000 Strasbourg, France}

\abstract{Mediation analysis aims to identify and estimate the effect of an exposure on an outcome that is mediated through one or more intermediate variables. 
In the presence of multiple intermediate variables, two pertinent methodological questions arise: estimating mediated effects when mediators are correlated, and performing high-dimensional mediation analyses when the number of mediators exceeds the sample size. This paper presents a two-step procedure for high-dimensional mediation analyses. The first step selects a reduced number of candidate mediators using an ad-hoc lasso penalty. The second step applies a procedure 
we previously developed to estimate the mediated effects, accounting for the correlation structure among the retained candidate mediators. We compare the performance of the proposed two-step procedure with state-of-the-art methods using simulated data. Additionally, we demonstrate its practical application by estimating the causal role of DNA methylation in the pathway between smoking and rheumatoid arthritis using real data.}

\keywords{mediation analysis, high-dimensional statistics, group lasso, variable selection, methylation data.}

\maketitle

\section{Introduction}\label{sec1}

Mediation analyses methods are widely used in biomedical and social sciences to disentangle the causal effect of a treatment on an outcome through intermediate variables called mediators. Modern causal mediation analysis is based on counterfactual variables and aims at decomposing the total effect into a direct effect and the mediated, or indirect, effect(s) carried by the mediator(s) \cite{pearl_direct_2001, imai_general_2010}.

In many practical problems, for instance in biomedical applications with intermediate variables of genomic nature, the number of potential mediators exceeds the sample size, leading to the high-dimensional mediation problem. Several methods have been proposed in recent years to address this challenging problem, for a review of the literature see \cite{blum2020challenges, han2023mediation}. Existing methods can be broadly categorized into two main families based on their approach to dimensionality reduction.

Methods in the first family build uncorrelated linear combinations of potentials mediators, using PCA \cite{huang_hypothesis_2016}, sparse PCA \cite{han_sparse_2020} or PLS \cite{assi_statistical_2015} approaches. In \cite{chen_high-dimensional_2018} a linear combinations of candidate mediators is chosen by maximising a criterion based on the joint likelihood of the treatment/mediator and mediator/outcome models. This approach is extended in \cite{geuter2020multiple} using a generalized version of population value decomposition (PVD).
With any of these methods, the mediated effect carried by each linear combination can be evaluated, and the weights of the mediators within these linear combinations reveal their contribution to the mediated effects.

A second family of approaches, to which this paper belongs, involves screening the candidate mediators to select a subset and subsequently estimating their mediated effects. \cite{van_kesteren_exploratory_2019} proposes to explore the set of possible mediators by a coordinate descent updating at each step the status of a small number of potential mediators. \cite{derkach_high_2019} reduces the dimensionality by introducing a small set of latent variables governing both the potential mediators and the outcome. To introduce further approaches, let us assume linear (or logistic) regression models, and let $\balpha$ be the vector of the coefficients of the exposure in the regression models of the candidate mediators given the exposure (one model per mediator), and $\bbeta$ the vector of the coefficients of the candidate mediators in the model of the outcome given the mediators and the exposure. With these notations, a third way to select mediators is to suppose that $\balpha$ and $\bbeta$ follow Gaussian mixture models whose base distributions are centered and with either small or large variance. 
\cite{song_bayesian_2018} proposes a Bayesian Sparse Linear Mixed Model for high-dimensional mediation analysis which is a one-step method. In contrast, the HDMAX2 method \cite{jumentier2023high} makes no distributional assumption. For each candidate mediator $M_k$, the HDMAX2 method tests the nullity of $\alpha_k$ and $\beta_k$, and the squared maximum of the two corresponding p-values is considered as a new p-value used as a selection criterion.  \cite{djordjilovic2019global, dai2022multiple} similarly consider the maximum of the p-values and introduce testing procedures that allow to control the FDR. \cite{dai2024controlling} also achieves to control the global FDR, but rather considers a two-step procedure controlling the FWER on $\balpha$ and the FDR on $\bbeta$.  

Other methods for the selection of mediators rely on penalized likelihood optimization with the selection method varying according to the considered model and penalization. After reducing the pool of candidate mediators from a large number to a moderate number by employing the sure independence screening, \cite{zhang_estimating_2016}, and its extension \cite{perera2022hima2}, conduct variable selection with the minimax concave penalty, or a de-biased lasso procedure respectively, and finally carry out joint significance testing for mediation effect. Interestingly, \cite{loh_non-linear_2020} considers a different definition of the mediated effect, called interventional indirect effect, that needs less stringent hypothesis on the joint law of the mediators. The selection strategy relies  on two penalized regression, for $\balpha$ and $\bbeta$, respectively.

In this article, we propose a new two-step approach for the selection of candidate mediators and the estimation of individual mediated effects. The first filtering step reduces the number of candidate mediators by solving a penalized optimization problem with group lasso penalty that takes simultaneously the parameters of interest $\balpha$ and $\bbeta$ into account. 
Once the number of candidate mediators is lower than the sample size, the second step consists in running the algorithm developed in \cite{jerolon_causal_2020} to estimate and test the mediated effects of the retained mediators, together with the direct effect.

This article is organized as follows. Section~\ref{sect_mediation} defines the problem of high-dimensional mediation analysis and introduces the notations and underlying hypotheses. Our algorithm is detailed in Section~\ref{sect_method}. The results of the comparisons with previously published methods on synthetic datasets are reported in Section~\ref{sect_simulation}. An illustration on a real dataset is shown in Section~\ref{sect_application}. Section~\ref{sect_discussion} discusses our results.  

\begin{center}
\begin{figure}[htp]
\begin{tikzpicture}
\node (m1)	at (7,9) 	{$M_1$};
\node (m2)	at (7,8.5) 	{$M_2$};
\node (m3)	at (7,8) 	{$M_3$};
\node (m4)	at (7,7) 		{$M_4$};
\node (m5)	at (7,6.5) 	{$M_5$};
\node (m6)	at (7,5.5) 	{$M_6$};
\node (m7)	at (7,5) 		{$M_7$};
\node (mK9)	at (7,4) 	{$M_{14}$};
\node (mK10)	at (7,3.5) 	{\vdots};
\node (mK13)	at (7,3) 		{$M_{K}$};
\node (t1) 		at (0,6) 	{$T$};
\node (y) 		at (14,6) 	{$Y$};

edges %
\begin{scope}[every path/.style={->, line width= 1}, every node/.style={sloped, inner sep=1pt}]
\draw (t1) --  (m1)  ;
\draw (t1) --  (m2)  ;
\draw (t1) --  (m3)  ;
\draw (t1) --  (m4)  ;
\draw (t1) --  (m5)  ;

\draw (m1) --  (y)  ;
\draw (m2) --  (y)  ;
\draw (m3) --  (y)  ;
\draw (m6) --  (y)  ;
\draw (m7) --  (y)  ;


\end{scope}
\end{tikzpicture} 
\caption{Example of a high-dimensional mediation model with treatment $T$ and outcome $Y$. The direct effect of $T$ on $Y$ and the effects of pretreatment confounders $X$ on all depicted variables are included in the model but omitted from the figure for readability. Candidate mediators $M_1$ to $M_3$ are true mediators, while $M_4$ to $M_K$ are not. \label{plusTbigmed}}
\end{figure}
\end{center}
 

\section{A high-dimensional mediation analysis model}\label{sect_mediation}

We consider a mediation model with a binary exposure (or treatment) $T$, $K$ candidate mediators $(M_1,\ldots,M_K)$ and an outcome $Y$. Let $(X_1,\dots,X_L)$ be the vector of pretreatment confounders. An example is shown in Figure~\ref{plusTbigmed}.  
If $K$ is large, in particular larger than the sample size $n$, the problem of identifying and inferring the direct and mediated effects in the model is referred to as \textit{high-dimensional mediation analysis}. In this high-dimensional setting, the aim of our algorithm is to identify which candidate mediators truly have a mediated effect and to estimate the corresponding direct and mediated effects. Both the candidate mediators and the outcome are assumed to be either Gaussian or binary, and are therefore modeled using either Gaussian or logistic regression models, respectively. We consider the following data structures:

\begin{itemize}
 \item $n\times 1$ column vector $\bT$, where the entry $t_i$ is the $i^{th}$ observation of $T$ 
 \item $n\times K$ matrix $\bM$, where the entry $m_{ik}$ is the $i^{th}$ observation of $M_k$
 \item $n \times 1$ column vector $\bY$, where the entry $y_i$ is the $i^{th}$ observation of $Y$
 \item $n\times L$ matrix $\bX$, where the entry $x_{il}$ is the $i^{th}$ observation of 
$X_l$. 
\end{itemize}

\subsubsection*{Regression models for the candidate mediators $M_k$}

If the $k^{th}$ canidate mediator is continuous, we assume the following Gaussian model:
\[   M_k = \alpha_{0k} + \alpha_{1k} T + \sum_{l=1}^L \xi_{lk} X_l + \epsilon_k     \mbox{ with } \epsilon_k\sim \mathcal{N}(0,\sigma_k^2). \]
We denote  by $\hat{m}_{ik}(\balpha,\bxi)$ the associated prediction for the $i^{th}$ individual seen as a function of the model parameters: $\hat{m}_{ik}(\balpha,\bxi) = \alpha_{0k} + \alpha_{1k} t_{i} + \sum_{l=1}^L \xi_{lk} x_{il}$. 

If the $k^{th}$ potential mediator is binary, we assume the following logistic regression model:
\[  \log \left(\frac{\proba(M_k=1)}{1-\proba(M_k=1)}\right) = \alpha_{0k} +  \alpha_{1k} T  + \sum_{l=1}^L \xi_{lk} X_l. \]
We then denote $\hat{m}_{ik}(\balpha,\bxi)$ the associated prediction for $\proba(m_{ik}=1)$, that is 
$\hat{m}_{ik} = \exp(\hat{\nu}_{ik})/(1+\exp(\hat{\nu}_{ik}))$ with $\hat{\nu}_{ik}(\balpha,\bxi) = \alpha_{0k} + \alpha_{1k} t_{i}+ \sum_{l=1}^L \xi_{lk} x_{il}$. All predictions $\hat{m}_{ik}$ are compiled into the matrix $\bMhat(\balpha,\bxi)$. 

\subsubsection*{Regression model for the outcome $Y$}

If the outcome is continuous, we assume the following Gaussian model
\[   Y = \gamma_{0} + \gamma_{1} T + \sum_{k=1}^K \beta_{k} M_k + \sum_{l=1}^L \psi_{l} X_l+ \epsilon     \mbox{ with } \epsilon \sim \mathcal{N}(0,\sigma^2). \]
We denote $\hat{y}_{i}(\bbeta,\bgamma,\bpsi)$ the prediction for the $i^{th}$ individual: $\hat{y}_{i} (\bbeta,\bgamma,\bpsi)= \gamma_{0} + \gamma_{1} t_{i} + \sum_{k=1}^K \beta_{k} \hat{m}_{ik} + \sum_{l=1}^L \psi_{l} x_{il}.$ 
If the outcome is binary, we consider the following logistic model
\[  \log\left(\frac{\proba(Y=1)}{1-\proba(Y=1)}\right) = \gamma_{0} + \gamma_{1} T + \sum_{k=1}^K \beta_{k} M_k  + \sum_{l=1}^L \psi_{l} X_l.\]
In this case, 
$\hat{y}_{i} (\bbeta,\bgamma,\bpsi) = \exp(\hat{z}_{i})/(1+\exp(\hat{z}_{i}))$
with $\hat{z}_{i} = \gamma_{0} + \gamma_{1} t_{i} + \sum_{k=1}^K \beta_{k} \hat{m}_{ik} + \sum_{l=1}^L \psi_{l} x_{il}$. In both cases, all predictions $\hat{y}_i$ are compiled into the vector $\bYhat(\bbeta,\bgamma,\bpsi)$. In applications, predictions are obtained by substituting the parameters with their estimated values, typically calculated by maximising the likelihood. This yields the estimated matrices $\bMhat(\hat{\balpha},\hat{\bxi})$ and $\bYhat(\hat{\bbeta},\hat{\bgamma},\hat{\bpsi})$.

\section{MAHI: a two-step algorithm for Mediation Analysis with HIgh-dimensional data}\label{sect_method}

The method proposed in this paper is a two-step procedure based on a previous work~ \cite{jerolon_causal_2020} that allows, in a small-dimensional setting (with fewer candidate mediators than observations), to estimate a confidence interval for the mediated effect of each candidate mediator, even when they are (uncausally) related. 
The goal of the first step in our MAHI algorithm is to select a subset of $K_{\max}<n$ candidate mediators. This step aims to reduce the dimensionality of the problem while retaining as many true mediators as possible.
In the second step, the aforementioned previously developed method is applied to refine this selection by excluding candidate mediators that exhibit no significant mediated effect. 

\subsection{Step 1: from high to low dimension}

 This step relies on an ad hoc loss function depending on the parameters $(\balpha,\bbeta,\bgamma,\bxi,\bpsi)$ and on a group lasso procedure with stability selection.

\subsubsection*{Definition of the loss functions}

We consider the following loss functions for the regression models of the candidate mediators and the outcome: 
\[  \ell_{M_k}(\balpha,\bxi) = \left\{ \begin{array}{ll}
  \frac{1}{2} \sum_{i=1}^n (\hat{m}_{ik}-m_{ik})^2 & \mbox{if } M_k \mbox{ is Gaussian} \\
  \sum_{i=1}^n -m_{ik} \hat{\nu}_{ik} + \log(1+\exp(\hat{\nu}_{ik}))  & \mbox{if } M_k \mbox{ is binary} 
  \end{array} \right. \]
and
\[  \ell_Y(\bbeta,\bgamma,\bpsi) = \left\{ \begin{array}{ll}
  \frac{1}{2} \sum_{i=1}^n (\hat{y}_{i}-y_{i})^2 & \mbox{if } Y \mbox{ is Gaussian} \\
  \sum_{i=1}^n -y_{i} \hat{z}_{i} + \log(1+\exp(\hat{z}_{i})) & \mbox{if } Y \mbox{ is binary}. 
  \end{array} \right. \]
The loss function associated to the whole model is then defined as
\[    f(\balpha,\bbeta,\bgamma,\bxi,\bpsi) = \frac{1}{n} \left(\sum_{k=1}^K \ell_{M_k}(\balpha,\bxi) + w_Y\ell_Y(\bbeta,\bgamma,\bpsi)\right), \]
where the weight $w_Y>0$ allows to tune the relative importance of the treatment-outcome and mediators-outcome relationships. 
Varying $w_Y$ will therefore allow to select candidate mediators with different behaviors.

\subsubsection*{The group lasso and the proximal operator}

The group lasso \cite{yuan_model_2006,meier_group_2008} is used to select candidate mediators by minimizing a penalized version of $f$, with a penalty that promotes sparsity by encouraging the nullity of some pre-defined groups of parameters. More precisely,  we group the coefficients $\alpha_{1k}$ and $\beta_k$ for each candidate mediator $M_k$ in order to jointly select them either out of the model (false mediators) or into the model (promising candidate mediators that deserve further inspection). The considered problem can then be written, for a given regularization parameter $\lambda>0$, as
\begin{equation}
    \underset{\balpha,\bbeta,\bgamma,\bxi,\bpsi}{\text{argmin}} f(\balpha,\bbeta,\bgamma,\bxi,\bpsi) + \lambda \sum_{k=1}^K \sqrt{\alpha_{1k}^2+\beta_k^2}.  
\label{eq:optimization_pr}    
\end{equation}
To solve this optimization problem, we employ the proximal method as described in \cite{bach2011optimization}. This method relies on an iterative procedure described in Appendix~\ref{sec:theor_det}.

We emphasize that the goal of Step 1 is to transform the initial problem into a small-dimensional problem while discharging as few true mediators as possible. The selection of some false mediators is not problematic because Step 2 will test individual mediated effects and exclude those candidate mediators $M_k$ whose mediated effects are not significant. A crucial consequence of this approach is that, under the assumption that the number of real mediators of interest is less than the sample size $n$, the value of the penalty parameter $\lambda$ does not require fine-tuning and can be chosen to be relatively small. This makes it possible to retain as many true mediators as possible in the list of candidate mediators selected by Step 1, with the only requirement being that the size of the selection reduces the original problem to a low-dimensional mediation analysis. 

\subsubsection*{Stability selection and parameter choices}

Lasso selection is known to be highly unstable, in the sense that small perturbations in the data may significantly change the selection. This problem can be addressed using the stability selection procedure introduced in \cite{meinshausen_stability_2010}, which selects variables based on the number of times they are chosen when running the original selection procedure on $N_{\text{boot}}$ bootstrap samples. The underlying idea is that relevant variables should, despite the instability, be selected more often than non-relevant ones. Moreover, as described earlier, using different values of $w_Y$ may enable the capture of mediators with values for $\alpha_{1k}$ and $\beta_k$ varying in different scale ranges.

Based on all these considerations, we propose the following procedure for the first selection step: 

\begin{itemize}

\item Specify a grid of values for $w_Y$, and integers $N_{\text{boot}}$ and $K_{\max}$.  

\item For each value of $w_Y$:
\begin{itemize}
    \item choose a value of $\lambda$ by dichotomy such that the number of retained candidate mediators is in a pre-defined interval, by default $[n/2,n]$.
    \item Obtain $N_{\text{boot}}$ bootstrap samples from the original data. For each bootstrap sample, run the proximal method to solve the optimization problem~(\ref{eq:optimization_pr}).
\end{itemize}


\item Rank the candidate mediators based on their frequency of selection across all obtained lists, from most to least frequently selected. The underlying rationale is that a true mediator should be selected more frequently than a non-mediating variable, which may only be selected occasionally as a false positive. 
\item Select the top $K_{\max}$ ranked candidate mediators.
\end{itemize}

The choice of $K_{\max}$ is guided by the fact that Step 2 is based on estimating the parameters of ``classic'' (i.e., non-penalized) regression models and that the number of the explanatory variables has to be chosen accordingly.  For a continuous outcome the default value is $2n/\log(n)$. 
For a binary outcome $K_{\max}$ must at most be equal to the integer part of $-2+n/50$ 
according to \cite{sample_2018}.
\medskip

It has to be noted that the soft thresholding induced by the lasso may select  candidate mediators for which only one of the coefficients $\alpha_{1k}$ or $\beta_k$ is non-null. These false mediators will be dealt with in Step 2. However, in a very high-dimensional setting, the number of such candidates may saturate the number $K_{\max}$. We then suggest using a pre-filtering step using the first step of HDMAX2 \cite{pittion2025published} applying a high p-value threshold. This test is specifically designed to eliminate the candidate mediator when either  $\alpha_{1k}=0$ or $\beta_k=0$. We applied this strategy in the application on real data presented in Section \ref{sect_application}.

\subsection{Step 2: estimation of direct and mediated effects}

The second step of MAHI involves estimating and testing the mediated effects through each of the selected candidate mediators, which we denote $M_1,\ldots,M_{K_{\max}}$ (up to a permutation of the original indices). 

In particular, we adopt the definition of the average indirect effect through $M_k$ defined in \cite{imai_identification_2013}. For each treatment value $t\in\{0,1\}$, we consider the average difference in counterfactual outcomes  
\begin{eqnarray}
\delta^k(t) =  \mathbb{E}\left[Y(t,M_k(1),W_k(t))\right] - \mathbb{E}\left[Y(t,M_k(0),W_k(t))\right], \label{dk}
\end{eqnarray}
where $W_k$ denotes the vector of all candidate mediators except $M_k$, and $W_k(t)$ and $M_k(t)$ denote counterfactual variables. For simplicity, we focus on the average effect $\delta^k = (\delta^k(1)+\delta^k(0))/2$. Estimating and testing these effects relies on the identifiability assumptions and method for low-dimensional multiple mediation analysis described in \cite{jerolon_causal_2020}. For clarity, we sketch here the corresponding quasi-Bayesian algorithm adapted from \cite{imai_general_2010} and refer the reader to \cite{jerolon_causal_2020} for all the details and its theoretical justification. 

\bigskip

\textbf{Algorithm for low-dimensional multiple mediation analysis:}

\begin{enumerate}
    \item Fit parametric models for the outcome and the $K_{\max}$ retained candidate mediators.
    \item Simulate $J$ times the model parameters from their estimated Gaussian multivariate sampling distribution.
    \item For each simulation, repeat the following steps:
    \begin{itemize}
        \item For each individual, simulate the vector of counterfactual candidate mediators. 
        \item For each individual, simulate the counterfactual outcomes corresponding to the simulated values of the counterfactual candidate mediators. 
        \item For each candidate mediator, calculate the mediated effect by averaging over all individuals.
    \end{itemize}
    \item For each candidate mediator, from the empirical distribution obtained above, calculate the point estimate of the mediated effect together with p-values and confidence intervals.
\end{enumerate}
In applications, we suggest to set $J=1000$, as in \cite{jerolon_causal_2020,JSSv059i05}. The final selection of mediators consists of the set of candidate mediators whose confidence intervals do not contain $0$ after correction for the $K_{\max}$ multiple comparisons. The type of multiple correction is left to the user. Note that, as detailed in  \cite{jerolon_causal_2020}, this algorithm also allows for the estimation of the direct and joint mediated effects.

\section{Simulation study}\label{sect_simulation}

We ran simulations to validate MAHI and to compare it to methods recently introduced in the literature.

\subsection{Models for simulated data}

\subsubsection{Continuous outcome}

We conducted two simulation studies with $100$ replicates each. The first study involved independent candidate mediators, while the second study considered correlated candidate mediators. In each replicate, we included $n = 100$ observations and $K = 500$ candidate mediators, simulated according to the model

\begin{equation}
\begin{array}{ccl}
M_{ik} & = & \mu_k  + \alpha_k T_i + \xi_{1k} X_{1i} + \xi_{2k} X_{2i} + \epsilon_{ik}\\ 
Y_i & = & 20 + 50 T_i + \sum_{k=1}^{K} \beta_k M_{ik} + \psi_{1k} X_{1i} + \psi_{2k} X_{2i} + \epsilon_{i0}
\end{array}\label{grand_simed}
\end{equation}
where $1\leq i\leq n$ and $1\leq k\leq K$. The exposure variable $T$ follows a Bernoulli distribution, $T \sim \mathcal{B}(0.3)$, and for each $k$, $\mu_k$ is drawn uniformly in the interval $[-2,2]$. Table~\ref{grand_param} shows the values of $\alpha_k$ and $\beta_k$ for the first 50 variables $M_k$. The higher the absolute value of $\alpha_k\beta_k$, the greater the mediated effect through $M_k$. As such, the first 10 mediators have strong mediated effects (and are, in principle, easier to select), the next 10 have mild mediated effects (less easy to detect) and the next 10 have weak mediated effects (hard to detect). All other 470 variables $M_k$ are not true mediators because $\alpha_k=0$ or $\beta_k=0$. 
In the first simulation study, the $\epsilon_{ik}$ are i.i.d. according to $\mathcal{N}(0,1)$ and $\xi_{1k} = \xi_{2k} = \psi_{1k} = \psi_{2k} = 0$ for all $k$. In the second study, we considered 10 clusters of candidate mediators, where within each cluster, variables are correlated, and the clusters themselves are independent. The first cluster includes $(M_1, M_{11}, M_{21}, M_{31}, M_{41}, M_{51},\dots,M_{491})$, the second cluster includes $(M_2, M_{12}, M_{22}, M_{32}, M_{42}, M_{52},\dots,M_{492})$, and so on. Thus, each cluster consists of 50 variables including one strong mediator, one mild mediator, one weak mediator and 47 non-mediating variables. Within each cluster the $\epsilon_{ik}$ were simulated according to a centered multivariate normal distribution with all variances equal to 1 and pairwise correlations set to 0.9. Covariates $X_{1}$
and $X_{2}$ follow a standard normal distribution. The values of $\xi_{1k}, \xi_{2k},\psi_{1k}$ and $\psi_{2k}$ are taken between 1 and 6.
\begin{table}[htp]
\begin{tabular}{|c|cccccccccc|}
\hline
$\boldsymbol{k}$ & \bf{1} & \bf{2} & \bf{3} & \bf{4} & \bf{5} & \bf{6} & \bf{7} & \bf{8} & \bf{9} & \bf{10} \\
\hline
$\alpha_{k}$ & -15 & 6 & -6 & -13 & 11 & 16 & -9 & 9 & 14 & 20\\
$\beta_{k}$ & -11 & 11 & -15 & -5 & 7 & 13 & 14 & -7 & -9 & 11\\
\hline
\hline
$\boldsymbol{k}$ & \bf{11} & \bf{12} & \bf{13} & \bf{14} & \bf{15} & \bf{16} & \bf{17} & \bf{18} & \bf{19} & \bf{20} \\
\hline
$\alpha_{k}$ & -3 & 2 & -2 & 4 & -2 & -1 & 4 & 1 & -1 & 2\\
$\beta_{k}$ & 4 & 2 & 2 & -1 & 3 & 2 & -3 & 3 & 3 & 3\\
\hline
\hline
$\boldsymbol{k}$ & \bf{21} & \bf{22} & \bf{23} & \bf{24} & \bf{25} & \bf{26} & \bf{27} & \bf{28} & \bf{29} & \bf{30} \\
\hline
$\alpha_{k}$  & 0.5 & 0.2 & -0.5 & -0.3 & 0.7 & -0.3 & 0.8 & -0.2 & 0.6 & -0.3\\
$\beta_{k}$  & 0.5 & 0.2 & -0.7 & 0.7 & 0.6 & -0.6 & -0.6 & 0.2 & -0.7 & 0.3\\
\hline
\hline
$\boldsymbol{k}$& \bf{31} & \bf{32} & \bf{33} & \bf{34} & \bf{35} & \bf{36} & \bf{37} & \bf{38} & \bf{39} & \bf{40} \\
\hline
$\alpha_{k}$ & 20 & 20 & 20 & 20 & 20 & 4 & 4 & 4 & 4 & 4\\
$\beta_{k}$ & 0 & 0 & 0 & 0 & 0 & 0 & 0 & 0 & 0 & 0\\
\hline
\hline
$\boldsymbol{k}$ & \bf{41} & \bf{42} & \bf{43} & \bf{44} & \bf{45} & \bf{46} & \bf{47} & \bf{48} & \bf{49} & \bf{50} \\
\hline
$\alpha_{k}$  & 0 & 0 & 0 & 0 & 0 & 0 & 0 & 0 & 0 & 0\\
$\beta_{k}$  & 20 & 20 & 20 & 20 & 20 & 4 & 4 & 4 & 4 & 4\\
\hline
\end{tabular}\caption{
Values of $\alpha_k$ and $\beta_k$ for $k=1,\ldots,50$. For $k=51,\ldots,500$, $\alpha_k=\beta_k=0$.} \label{grand_param}
\end{table}

\subsubsection{Binary outcome}

We simulated $100$ replicates, including $n = 1350$ observations and $K = 2000$ independent candidate mediators each, according to the model
\begin{equation}
\begin{array}{ccl}
M_{ik} & = & 1  + \alpha_k T_i + \epsilon_{ik}\\ 
Y_i^* & = & -65 +  T_i + \sum_k \beta_k M_{ik} +\epsilon_{i0}\\
Y_i & = & \mathds{1}_{Y_i^* > 0}
\end{array}\label{bin_grand_simed}
\end{equation}
where $1\leq i\leq n$ and $1\leq k\leq K$. The exposure variable $T$ follows a Bernoulli distribution, $T \sim \mathcal{B}(0.3)$, the residuals $\epsilon_{i0}$ follow a logistic distribution,  $\epsilon_{i0} \sim \mathcal{L}(0,1)$, and  the $\epsilon_{ik}$ are i.i.d. according to  $\mathcal{N}(0,1)$ for each $k$. As shown in Table \ref{bin_grand_param}, the $15$ true mediators $M_1,\ldots,M_{15}$ are split in three sets of 5 mediators each, with strong, mild and weak mediated effects respectively. Using Monte Carlo simulations, we determined that these parameter values result in average mediated effects of 0.078, 0.018, and 0.004 for the strong, mild, and weak mediators, respectively. 

\begin{table}[htp]
\begin{tabular}{|c|cccccccccc|}
\hline
$\boldsymbol{k}$ & \bf{1} & \bf{2} & \bf{3} & \bf{4} & \bf{5} & \bf{6} & \bf{7} & \bf{8} & \bf{9} & \bf{10} \\
\hline
$\alpha_{k}$ & 2 & 2 & 2 & 2 & 2 & 1 & 1 & 1 & 1 & 1\\
$\beta_{k}$ & 2 & 2 & 2 & 2 & 2 & 1 & 1 & 1 & 1 & 1\\
\hline
\hline
$\boldsymbol{k}$ & \bf{11} & \bf{12} & \bf{13} & \bf{14} & \bf{15} & \bf{16} & \bf{17} & \bf{18} & \bf{19} & \bf{20} \\
\hline
$\alpha_{k}$ & 0.5 & 0.5 & 0.5 & 0.5 & 0.5 & 0.5 & 0.5& 0.5 & 0.5 & 0.5\\
$\beta_{k}$ & 0.5 & 0.5 & 0.5 & 0.5 & 0.5 & 0& 0 & 0 & 0 & 0\\
\hline
\hline
$\boldsymbol{k}$ & \bf{21} & \bf{22} & \bf{23} & \bf{24} & \bf{25} & \bf{26} & \bf{27} & \bf{28} & \bf{29} & \bf{30} \\
\hline
$\alpha_{k}$  &0 & 0 & 0 & 0 & 0 & 0 &0 & 0 & 0 & 0\\
$\beta_{k}$  & 5 & 5 & 5 & 5 & 5 & 0 & 0 & 0 & 0 & 0\\
\hline
\end{tabular}\caption{Values of $\alpha_k$ and $\beta_k$ for $k=1,\ldots,30$. For $k=31,\ldots,2000$, $\alpha_k=\beta_k=0$.}\label{bin_grand_param}
\end{table}

\subsection{Methods settings}

We implemented our method in the \texttt{mahi} function of the GitHub R package \texttt{AllanJe/mahi}. For our simulation studies, we considered $N_{\text{boot}} = 30$, $w_Y=(1,2,\ldots,7,8)$, and $J=1000$. We set $K_{\max}$ to $30$ and $25$ for the analyses of the data with continuous and binary outcomes, respectively. 
This constraint ensures that the second step no longer deals with a high-dimensional setting. For the second step, p-values were adjusted using the Benjamini-Hochberg correction with a false discovery rate (FDR) threshold of $0.20$.

\subsubsection{Comparison to state-of-the-art methods for continuous outcomes}

We compared MAHI to the following six alternative methods on simulated data with continuous outcomes:

\begin{enumerate}

\item \cite{van_kesteren_exploratory_2019} introduced an approach for high-dimensional mediation analysis, called the \textit{Coordinate-wise Mediation Filter} (\underline{CMF}). The CMF implementation consists of two components: an internal algorithm which performs the selection of mediators by coordinate descent using a decision function $D$, and an external algorithm that runs several times the internal algorithm and aggregates the corresponding outputs. The entire procedure is implemented in the GitHub R package \texttt{vankesteren/cmfilter}. In our simulations, the decision function is the Sobel test. The external algorithm is run $1000$ times. Once the selection rate for each mediator is calculated, a mediator is chosen if its selection rate is greater than $0.079$, the value recommended by the authors.

\item \cite{zhang_estimating_2016} introduced the \underline{HIMA} (\textit{HIgh-dimensional Mediation Analysis}) algorithm, which is based on penalized regressions and uses a lasso-type penalty function called the concave minimax penalty (MCP)~\cite{zhang_nearly_2010}. The HIMA implementation consists of three steps: first the set of candidate mediators is reduced by means of the sure independent screening (SIS) method \cite{fan_sure_2008}, then the estimates $\widehat{\beta_k}$ are calculated using the MCP penalization criterion, and at last mediated effects are tested and p-values are adjusted according to the Bonferroni correction. The entire procedure is implemented and available in the R package \texttt{hima}. In our simulations, we chose the first $n/\log(n)$ mediators obtained with the SIS method, as recommended by the authors. 

\item \cite{gao_testing_2019} proposed a variation of HIMA allowing the selection of correlated candidate mediators, called \underline{HDMA} (\textit{High-Dimensional Mediation Analysis}). The HDMA method differs from HIMA in the second step, where debiased estimates of $\widehat{\beta_k}$ are calculated. The entire procedure is available in the GitHub R package \texttt{YuzhaoGao/High-dimensional-mediation-analysis-R}. In our simulations the settings are the same as for HIMA. 

\item \cite{song_bayesian_2018} introduced the \underline{BAMA} (\textit{Bayesian Mediation Analysis}) approach. It is a Bayesian inference method using continuous shrinkage priors to extend previous causal mediation analyses techniques to a high-dimensional setting.  For each candidate mediator, the posterior inclusion probability (PIP) is estimated measuring the association strength between exposure and mediators and between mediators and outcome. The candidate mediators with the highest PIP are selected as the active mediators. The entire procedure is implemented and available in the R package \texttt{bama}. The performance of BAMA is critically dependent on a user-specified PIP threshold. In our simulations we chose a PIP threshold of 0.1.

\item \cite{zhao_sparse_2020} introduced the \underline{SPCMA} (\textit{Sparse Principal Component Mediation Analysis}) algorithm. When candidate mediators are potentially causally related to one another, one approach is to perform a principal component analysis (PCA) to obtain orthogonal principal components (PCs), which can be treated as new, conditionally independent mediators. However, these new candidate mediators, which are linear combinations of the original candidate mediators, can be difficult to interpret. The sparse high-dimensional mediation analysis approach proposed in~\cite{zhao_sparse_2020} applies PCA with sparse loadings, making the principal components more interpretable as they are linear combinations of a few original candidate mediators. The entire procedure is implemented in the GitHub R package \texttt{zhaoyi1026/spcma}. In our simulations, variables $M_k$ are causally independent so we used the function recommended by the authors in this case, 
which performs marginal causal mediation analysis under the linear structural equation modeling framework.

\item \cite{jumentier_high-dimensional_nodate, pittion2025published} introduced the \underline{HDMAX2} procedure (\textit{High Dimensional mediation analysis with $\max^2$ test}). 
The selection procedure of HDMAX2 involves fitting latent factor mixed models (LFMMs,~\cite{caye2019lfmm}) to estimate the effects of exposure on mediators and the effect of each mediator on the outcome. For each candidate mediator, two p-values ($P_x$ and $P_y$) are derived from these models, testing the null hypotheses of no effect of exposure on the mediator and no effect of the mediator on the outcome, respectively. Candidate mediators are then selected using the $\max^2$ test, a novel test that uses the p-value $p=\max\{P_x, P_y\}^2$. Similar to the Sobel test, the $\max^2$ test rejects the null hypothesis that either the effect of exposure on the mediator or the effect of the mediator on the outcome is null. The selected candidate mediators are subsequently ranked by significance, and only those below a given threshold proceed to the second step. This step consists of performing simple mediation analyses for each selected candidate mediator using the \texttt{mediation} package~\cite{JSSv059i05} to estimate and test their mediated effects. The threshold can be determined using data-adaptive approaches, such as FDR control, or set manually by the user. In our study, we retained the 50 candidate mediators with the lowest $\max^2$ p-values. HDMAX2 is available in the GitHub R package \texttt{bcm-uga/hdmax2}.

\end{enumerate}

\subsubsection{Comparison to state-of-the-art methods for binary outcomes}

We compared MAHI to HIMA, HDMA and HDMAX2, all of which can also be applied to binary outcomes. After the first step of MAHI, we retained the top 
$\left\lfloor \frac{n}{50} - 2 \right\rfloor$ candidate mediators to proceed to the second step. For the three other methods we proceeded as follows :

\begin{enumerate}
\item For HIMA, we chose the 
first $\lceil n/(2\log(n))\rceil$ candidate mediators obtained with the SIS method, as recommended by the authors for a binary outcome.
\item For HDMA, we also chose the 
first $\lceil n/(2\log(n))\rceil$ mediators obtained with the SIS method, as recommended by the authors for a binary outcome.
\item For HDMAX2, we retained the top 25 candidate mediators at the end of the first step to proceed to the second step. We then applied the Hochberg correction to the results of the second step at a threshold of 0.05.
\end{enumerate}

Note that the implementations of HIMA and HDMA allow to choose different penalisation methods to obtain sparsity. We run them all, which explains the multiple results for each of the methods in Table~\ref{tab_true_binary}.

\subsection{Results}

Table \ref{tab_true_normal} and Table \ref{tab_true_binary} show, for each method, the mean of three performance metrics, namely precision, recall and specificity, over 100 replicates, for continuous and binary outcomes respectively. We recall that precision, or positive predictive value, is the proportion of 
variables of interest among those selected by the method;
recall, or sensitivity, is the proportion of 
selected variables among those of interest;
and specificity is the proportion of 
variables not selected among those of no interest. In particular, these metrics are defined with respect to four selection problems:

\begin{itemize}
\item the selection of all true mediators, 
\item the selection of strong mediators, 
\item the selection of mild mediators, 
\item the selection of weak mediators. 
\end{itemize}

Figures~\ref{res_true_normal} and~\ref{res_true_binary}  show the distribution of the three metrics across 100 replicates with independent candidate mediators, for continuous and binary outcomes, respectively. Figure~\ref{res_true_normal_corr} shows the distribution of the three metrics for the model with correlated mediators and continuous outcomes, and covariates included.
Figures~\ref{res_fake_normal} and~\ref{res_fake_binary} show the distribution of the false discovery rate (1-precision), the false negative rate (1-recall) and the false positive rate (1-specificity) across replicates with independent candidate mediators for continuous and binary outcomes, respectively. Figure~\ref{res_fake_normal_corr} displays the distribution of these three metrics across replicates with correlated mediators and continuous outcomes for the model that includes covariates. More specifically, the Figures~\ref{res_fake_normal},~\ref{res_fake_binary} and \ref{res_fake_normal_corr} pertain to the following selection problems:
\begin{itemize}
\item the selection of false mediators, 
\item the selection of false mediators with $\alpha_k \neq 0$ and $\beta_k =0$, 
\item the selection of false mediators with $\alpha_k = 0$ and $\beta_k \neq 0$.
\end{itemize}

Table \ref{tab:CP_ind} shows additional results for continuous candidate mediators, including the average number of selections by Step 1 and Step 2, the average bias, and the average coverage of the confidence intervals (calculated at Step 2) for the true indirect effect values.

\subsubsection{Results, continuous outcomes}

\begin{table}[htp]
\centering
\begin{tabular}{|c|c|ccc|ccc|}
\hline
\multicolumn{2}{|c|}{} & \multicolumn{3}{c|}{Independent candidate mediators} & \multicolumn{3}{c|}{Correlated candidate mediators}\\
\multicolumn{2}{|c|}{} & \multicolumn{3}{c|}{without covariates} & \multicolumn{3}{c|}{with covariates}\\
\hline
& Method			& Precision				& Recall				& Specificity & Precision				& Recall				& Specificity\\
\hline
\rotatebox{ 90}{ \hbox to 15pt{\hss All true mediators} } & MAHI & \textbf{0.897} & 0.225 & \textbf{0.996} & 0.901 & 0.108 & 0.999 \\ 
 & CMF & 0.556 & 0.109 & 0.994 & - & - & - \\
 & HIMA & 0.140 & 0.119 & 0.958 & \textbf{0.902} & 0.015 & \textbf{1.000} \\
 & HDMA & 0.790 & 0.193 & \textbf{0.996} & 0.218 & 0.029 & 0.992 \\  
 & BAMA & 0.666 & \textbf{0.608} & 0.981 & 0.608 & 0.177 & 0.993 \\ 
 & MCMA & 0.645 & 0.053 & 0.958 & - & - & - \\ 
 & HDMAX2 & 0.873 & 0.052 & 0.987 & 0.216 & \textbf{0.222} & 0.929 \\  
\hline
\rotatebox{ 90}{ \hbox to 15pt{\hss Strong mediators} } & MAHI & 0.816 & 0.595 & \textbf{0.996} & \textbf{0.855} & 0.306 & 0.999 \\ 
  & CMF & 0.321 & 0.185 & 0.992 & - & - & - \\
  & HIMA & 0.102 & 0.266 & 0.958 & 0.685 & 0.034 & \textbf{1.000} \\
  & HDMA & 0.676 & 0.494 & 0.995 & 0.111 & 0.041 & 0.991 \\ 
  & BAMA & 0.343 & \textbf{0.937} & 0.963 & 0.327 & 0.281 & 0.988 \\
  & MCMA & 0.630 & 0.045 & 0.986 & - & - & - \\ 
  & HDMAX2 & \textbf{0.824} & 0.114 & 0.995 & 0.118 & \textbf{0.349} & 0.926 \\
\hline
\rotatebox{ 90}{ \hbox to 19pt{\hss Mild mediators} } & MAHI & 0.083 & 0.066 & 0.985 & 0.046 & 0.018 & 0.993 \\ 
  & CMF & 0.205 & 0.124 & 0.990 & - & - & - \\
  & HIMA & 0.025 & 0.058 & 0.953 & 0.174 & 0.008 & \textbf{0.999} \\ 
  & HDMA & 0.101 & 0.075 & 0.986 & 0.030 & 0.012 & 0.991 \\ 
  & BAMA & 0.303 & \textbf{0.831} & 0.961 & \textbf{0.249} & 0.210 & 0.987  \\
  & MCMA & \textbf{0.365} & 0.016 & 0.985 & - & - & - \\ 
  & HDMAX2 & 0.158 & 0.011 & \textbf{0.993} & 0.083 & \textbf{0.256} & 0.924 \\
\hline
\rotatebox{ 90}{ \hbox to 10pt{\hss Weak mediators} } & MAHI & 0.028 & 0.012 & 0.984 & 0.000 & 0.000 & 0.992 \\
& CMF & 0.030 & 0.019 & 0.988 & - & - & - \\
& HIMA & 0.013 & 0.032 & 0.953 & 0.043 & 0.002 & \textbf{0.999} \\
& HDMA & 0.013 & 0.011 & 0.985 & \textbf{0.077} & 0.033 & 0.991 \\
& BAMA & 0.020 & \textbf{0.057} & 0.945 & 0.032 & 0.041 & 0.983 \\ 
& MCMA & \textbf{0.350} & 0.014 & 0.985 & - & - & - \\
& HDMAX2 & 0.110 & 0.004 & \textbf{0.993} & 0.015 & \textbf{0.061} & 0.920 \\ 
\hline
\end{tabular}
\caption{Comparison of high-dimensional mediation analysis methods with regards to the ability to select the true mediators $M_1,\ldots,M_{30}$: mean precision, recall and specificity over the 100 replicates simulated with \textbf{continuous} outcomes according to model \eqref{grand_simed}.}\label{tab_true_normal} 
\end{table}

Table~\ref{tab_true_normal} presents the results for both simulation settings with a continuous outcome, noting that CMF and MCMA are not designed to handle covariates.
In the simplest setting (independent mediators, no covariates), MAHI achieved the highest overall precision and a competitive recall compared to all other methods except BAMA. 
In the more challenging and realistic setting (correlated mediators, covariates included), MAHI maintained high precision, comparable to the top-performing method HIMA and greater than that of all other methods. Notably, HDMAX2, which performed similarly in the first setting, was outperformed in this scenario. Although MAHI’s recall degraded, the decline was moderate.
Overall, MAHI emerged as a competitive choice in terms of the precision-recall trade-off.


A deeper examination of the results, separating strong, mild, and weak mediators, indicates that MAHI had a great ability at selecting the strongest mediators but performed poorly in detecting the weakest ones. However, this behavior is reasonable when dealing with high-dimensional problems, such as mediation through the methylation of CpG sites, where candidate mediators are numerous and locally correlated. In such applications, focusing on the strongest signals is both practical and justified.

Table \ref{tab:CP_ind} shows that when either $\alpha_{1k}$ is large and $\beta_k = 0$, or, to a lesser extent, $\alpha_{1k} = 0$ and $\beta_k$ is large, Step 1 tended to select the corresponding false mediators. However, Step 2 effectively filtered out the vast majority of these false positives. Additionally, while the empirical coverage of the confidence intervals remained close to the nominal level for independent candidate mediators, it was lower for some of the correlated candidate mediators.


\medskip

\subsubsection{Results, binary outcome}

\begin{table}[htp]
\centering
\begin{tabular}{|c|c|ccc|}
\hline
& Method			& Precision				& Recall				& Specificity\\
\hline
\rotatebox{ 90}{ \hbox to 15pt{\hss All true mediators} } 
& MAHI & \bf{0.992} & \bf{0.767} & \bf{1.000} \\ 
& HIMA\_lasso & 0.760 & 0.153 & 0.999 \\ 
& HIMA\_MCP & 0.742 & 0.304 & 0.999 \\ 
& HIMA\_SCAD & 0.687 & 0.225 & 0.998 \\ 
& HDMA\_lasso & 0.717 & 0.619 & 0.998 \\
& HDMA\_ridge & 0.709 & 0.522 & 0.998 \\
& HDMAX2 & 0.991 & 0.295 & \bf{1.000} \\ 
\hline
\rotatebox{ 90}{ \hbox to 15pt{\hss Strong mediators} } 
& MAHI & 0.438 & \bf{1.000} & 0.997 \\ 
& HIMA\_lasso & 0.583 & 0.324 & 0.999 \\ 
& HIMA\_MCP & 0.547 & 0.622 & 0.998 \\ 
& HIMA\_SCAD & 0.500 & 0.430 & 0.998 \\ 
& HDMA\_lasso & 0.390 & 0.994 & 0.996 \\ 
& HDMA\_ridge & 0.448 & 0.962 & 0.997 \\ 
& HDMAX2 & \bf{0.837} & 0.732 & \bf{1.000} \\
\hline
\rotatebox{ 90}{ \hbox to 19pt{\hss Mild mediators} } 
& MAHI & \bf{0.406} & \bf{0.936} & 0.997 \\ 
& HIMA\_lasso & 0.180 & 0.102 & 0.999 \\ 
& HIMA\_MCP & 0.173 & 0.224 & 0.997 \\ 
& HIMA\_SCAD & 0.206 & 0.180 & 0.998 \\ 
& HDMA\_lasso & 0.241 & 0.632 & 0.995 \\ 
& HDMA\_ridge & 0.198 & 0.456 & 0.995 \\ 
& HDMAX2 & 0.126 & 0.128 & 0.998  \\ 
\hline
\rotatebox{ 90}{ \hbox to 10pt{\hss Weak mediators} } 
& MAHI & \bf{0.148} & \bf{0.366} & 0.995 \\ 
& HIMA\_lasso & 0.097 & 0.030 & 0.998 \\ 
& HIMA\_MCP & 0.072 & 0.066 & 0.997 \\ 
& HIMA\_SCAD & 0.131 & 0.062 & 0.997 \\ 
& HDMA\_lasso & 0.085 & 0.230 & 0.994 \\ 
& HDMA\_ridge & 0.064 & 0.148 & 0.995 \\ 
& HDMAX2 & 0.028 & 0.026 & 0.998 \\
\hline
\end{tabular}
\caption{Comparison of high-dimensional mediation analysis methods with regards to the ability to select the true mediators $M_1,\ldots,M_{15}$: mean precision, recall and specificity over the 100 replicates simulated with \textbf{binary} outcomes according to model \eqref{bin_grand_simed}.}\label{tab_true_binary} 
\end{table}

Table~\ref{tab_true_binary} demonstrates that MAHI achieved the best recall, with an average of only 23\% of the true mediators not being selected, and the best precision, with almost all of the selected variables being true mediators. While MAHI successfully selected all strong mediators, its precision was lower compared to almost all concurrent methods. However, when selecting mild and weak mediators, MAHI exhibited the best recall and precision.


\section{Illustration on real data : mediation of smoking on rheumatoid arthritis outcomes}
\label{sect_application}

\subsection{Biological context}

Rheumatoid Arthritis (RA) is a chronic inflammatory disease influenced by both genetic and environmental factors. Smoking has been identified as one of the most important extrinsic risk factor for its development and severity \cite{Chang2014-gv}. DNA methylation (DNAm),  an epigenetic mechanism that involves the methylation of specific bases in the DNA strand, can regulate gene transcription, thereby affecting disease development. The relationship  between DNAm levels and RA occurence was first investigated in \cite{Liu2013-bv}. In addition, several association studies have already established the impact of tobacco consumption on DNAm \cite{Kaur2019-cz}. As a case study, we explored to which extent DNAm mediates the effect of tobacco consumption on the occurrence of RA. The dataset was collected from the Gene Expression Omnibus (GEO) database using the accession number GSE42861 \cite{Liu2013-bv}. It consists of Illumina HumanMethylation450 BeadChip array in peripheral blood leukocytes (PBLs) from RA patients (n = 354) and normal controls (n = 333). Clinical data including age, gender, smoking status and residential area were provided for each sample. Two patients were excluded from the analysis because their smoking status was unknown.

\medskip

\subsection{Mediation analysis}

To proceed with the mediation analysis, the categorical smoking status variable was transformed into a binary variable. Patients who had never smoked or reported only occasional smoking were classified as non-smokers (coded as 0). Former and current smokers were grouped together and classified as smokers (coded as 1). Additionally, age and gender were included as adjustment variables in the model. The DNAm matrix included 473,864 CpG probes (i.e. features) across 687 patients. 
Due to the very large initial number of probes,
a preliminary selection was done using the HDMAX2 method.
First, we used the \textit{hdmax2\_step1} approach to run association studies for all potential mediators and to test the significance of the estimated mediated effects. Then we applied a filter to select the top $1000$ probes with the most significant p-values (Figure~\ref{man_and_forest_plot}A). The resulting subset of DNAm probes is still high-dimensional but computationally less expensive. Subsequently, the MAHI method was applied to this refined subset of DNAm probes with the tuning parameters set to $N_{\text{boot}}=50, w_Y=(1,2,\ldots,7,8)$, $K_{\max}=50$, and $J=1000$. 
Mediated ORs, corresponding to the indirect effect mediated by DNAm probes, were estimated for the selected subset of CpGs along with their CI. The top 50 CpGs mediators are depicted in Figure~\ref{man_and_forest_plot}B.

\begin{figure}[htp]
    \centering
    \includegraphics[width=\textwidth]{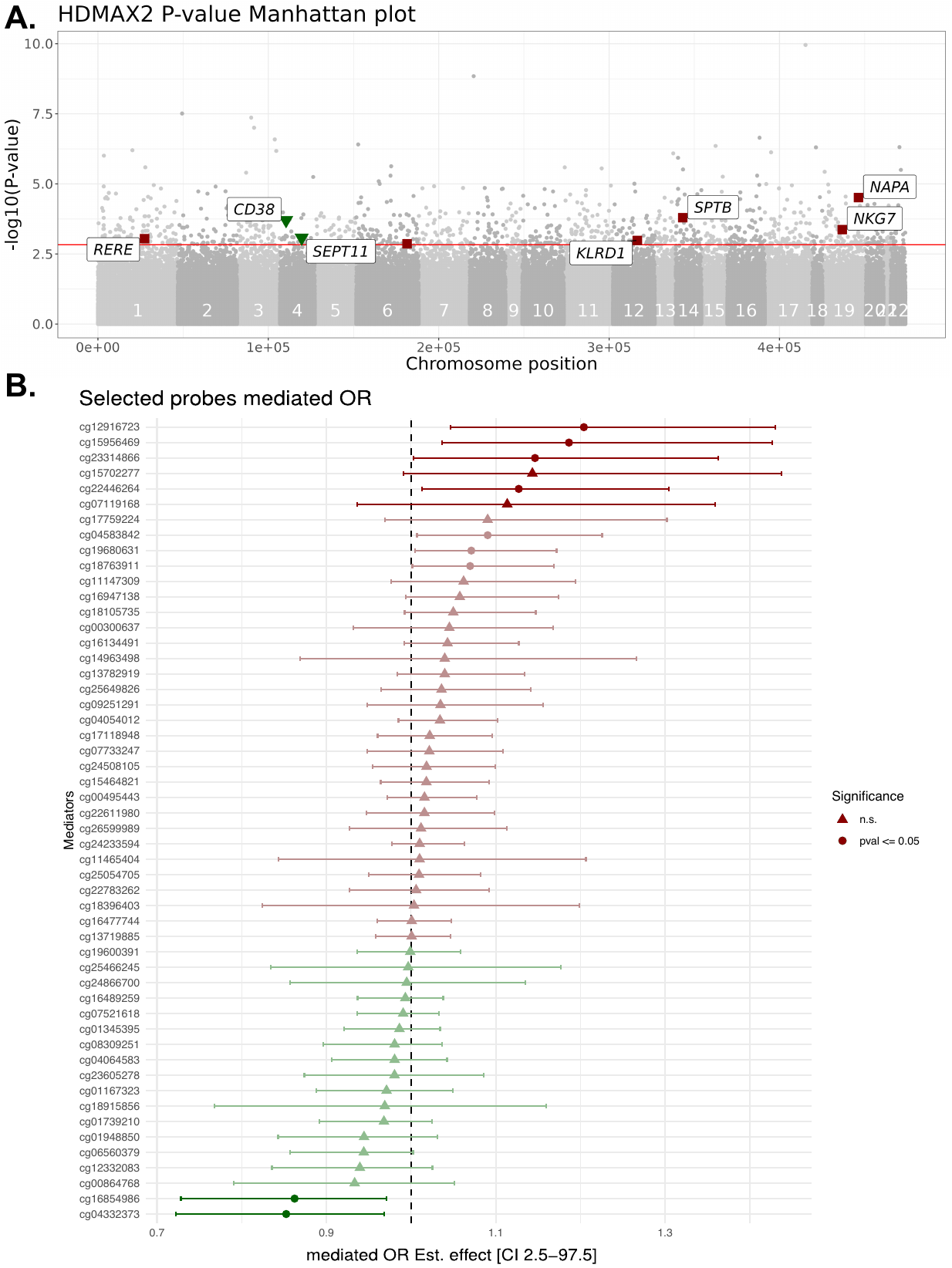}
    \caption{Summary of mediation analysis of smoking on RA occurrence through DNA methylation. \textbf{A} Manhattan plot displaying the –log10 transformed p-values estimated using the max-squared method (HDMAX2) for each CpG site. Each dot represents an individual CpG, ordered on the $x$-axis according to their genomic position. The red line indicates the threshold for the top $1,000$ CpGs selected for further analysis, on which MAHI was applied. Red squares represent probes with MAHI ORs greater than $1.10$, while green triangles represent probes with MAHI ORs lower than $0.9$.  Labels correspond to genes associated with the selected probes, if any. Chromosome numbers are labeled in white.   \textbf{B} Mediated ORs for the top 50 mediators. The estimate effect is represented by a dot and its unadjusted 95\% CI by the bar. Symbols correspond to the significance cut off of 5\%  (square for p-value $\geq 0.05$, circle p-value $< 0.05$). Colors correspond to the sign and importance of the effect (dark green for estimated OR under  $0.9$, light green for estimated OR between $0.9$ and $1$, pink for estimated OR between $1$ and $1.1$ and dark red for estimated OR over $1.1$). 
    }
    \label{man_and_forest_plot}
\end{figure}


\subsection{Biological interpretation}

Table~\ref{probes_info} summarizes the results and relevant biological information for the selected CpG mediators that show ORs greater than $1.10$ and lower than $0.9$. When OR values are lower than $0.9$, occurrence of RA is significantly reduced. In this context, our method identified two CpG mediators  (cg04332373 and cg16854986) for which methylation appears  to decrease in RA patients.
When OR values are greater than $1.1$, the occurrence of RA is significantly increased. Interestingly, we observe varying scenarios in terms of mediated effects for ORs $\geq 1.1$. In some instances, the methylation of CpG mediators decreases in RA patients compared to controls (e.g. cg23314866, cg15702277 and cg22446264), while in other cases, it increases (e.g. cg07119168, cg12916723 and cg15956469). This illustrates complex mediation pathways, suggesting that different biological processes are likely at play. We also examined whether some genes associated with the selected CpG mediators were previously known in the literature to be linked to RA (Table~\ref{probes_info}, ``Pubmed hits'' column). Interestingly, our approach not only identified known candidates (i.e. \textit{CD38}) but also discovered new probes that had not previously been associated with RA, opening the way to new research perspectives and experimental validation. \\

\begin{table}[htp]
\renewcommand{\arraystretch}{1.2}
\centering
\begin{adjustbox}{max width=\textwidth}
\begin{tabular}
{p{1.2cm}c>{\centering\arraybackslash}p{1.5cm}>{\centering\arraybackslash}p{1.5cm}p{0.7cm}>{\centering\arraybackslash}p{1.4cm}>{\centering\arraybackslash}p{2.2cm}}

\textbf{CpG Probes} & \textbf{mediated OR} & \textbf{mean DNAm cases} & \textbf{mean DNAm controls} & \textbf{Chr} & \textbf{Associated genes} &   \textbf{Pubmed hits} 

\\ \hline

\multicolumn{7}{c}{\textbf{OR less than 0.90}} 

\\ \hline

cg04332373 & 0.85{[}0.72, 0.97{]}** & 0.20 $\pm$ 0.03 & 0.22 $\pm$ 0.03 & chr4 & \textit{CD38} &   147 

\\

cg16854986 & 0.86{[}0.73, 0.97{]}** & 0.12 $\pm$ 0.03 & 0.13 $\pm$ 0.04 & chr4 & \textit{SEPT11} &   0

\\ \hline

\multicolumn{7}{c}{\textbf{OR more than 1.10}} 

\\ \hline

cg23314866 & 1.15{[}1.00, 1.36{]}** & 0.24 $\pm$ 0.05 & 0.29 $\pm$ 0.04 & chr19 & \textit{NAPA} &   1 

\\

cg07119168 & 1.11{[}0.94, 1.35{]} & 0.81 $\pm$ 0.03 & 0.78 $\pm$ 0.04 & chr14 & \textit{SPTB} &    3

\\

cg12916723 & 1.20{[}1.04, 1.43{]}*** & 0.63 $\pm$ 0.03 & 0.61 $\pm$ 0.04 & chr19 & \textit{NKG7} &   2

\\

cg15702277 & 1.14{[}0.99, 1.44{]}* & 0.25 $\pm$ 0.05 & 0.31 $\pm$ 0.05 & chr1 & \textit{RERE} &   0 

\\

cg15956469 & 1.18{[}1.04, 1.43{]}** & 0.89 $\pm$ 0.04 & 0.85 $\pm$ 0.05 & chr12 & \textit{KLRD1} &   8 

\\

cg22446264 & 1.13{[}1.01, 1.30{]}** & 0.47 $\pm$ 0.07 & 0.54 $\pm$ 0.07 & chr6 & - &   - \\ \hline

\end{tabular}
\end{adjustbox}
\caption{For each selected probes: mediated OR (with CI, *, **, *** , res. significant OR at $5\%$, $1\%$ and $0.1\%$ type I error), DNAm mean $\pm$ standard deviation for cases group and controls group, chromosome in which probe is located, nearest gene (identified using Illumina annotations), 
and the number of Pubmed matching hits with gene symbol and RA.}
\label{probes_info}
\end{table}

\section{Discussion and conclusion}
\label{sect_discussion}

In this article we introduced MAHI, a two step-procedure for high-dimensional mediation analysis where the candidate intermediate variables outnumbers the available observations. In Step 1, MAHI first performs variable selection in the pool of candidate mediators through a group lasso penalty that we adapted specifically to the mediation problem. Then, in Step 2, MAHI estimates and tests the direct and mediated causal effects in the resulting lower-dimensional mediation model using the multiple mediation analysis method we developed in \cite{jerolon_causal_2020}.

On simulated data, MAHI generally achieved good results compared to alternative methods. Specifically, it outperformed existing methods in terms of precision, recall, and specificity when applied to binary outcomes. On simulated data with continuous outcomes, MAHI demonstrated the best overall precision and its recall, while not the highest, ranked in the middle among the assessed methods, reflecting a solid balance between precision and recall. In particular, MAHI had good  performances with correlated candidate mediators when compared to the other methods. This is a appealing attribute of MAHI, as in practical applications, candidate mediators are often expected to be correlated, typically due to residual confounding. It is, however, important to stress that the performance of MAHI declined with mild and weak mediators.


The principal methodological novelty of this work is Step 1 of MAHI. Our simulation results suggest that integrating this initial step with our previously developed inferential algorithm yields highly satisfactory performance. However, it is important to note that Step 1 can, in principle, be implemented prior to any method designed for low-dimensional analysis.  Nevertheless, when handling correlated candidate mediators, we suggest following through with the second step of MAHI, as detailed in this article. 

One limitation of the current implementation of our two-step procedure is that the indirect effects of candidate mediators are tested in Step 2 using the same data used for their selection in Step 1. While our empirical study suggests that this post-selection inference has a limited impact on precision and recall, likely because strong mediators dominate the signal, it remains a potential source of bias. A simple way to mitigate this limitation would be to split the data into two parts, using one subset for selection and the other for inference. However, this requires a sufficiently large initial dataset to maintain statistical power.

When dealing with an extremely large number of candidate mediators, such as hundreds of thousands, the current R implementation of MAHI may become computationally ineffective. The complexity of MAHI is largely influenced by the choice of the number of bootstrap replicates $N_{\text{boot}}$, the length of the grid of weights $w_Y$ for the first step, as well as the number of simulations $J$ for the quasi-Bayesian algorithm in the second step. There is substantial potential to parallelize the current implementation with respect to these parameters, which would greatly enhance its execution speed. This is left for future works. Another possibility in presence of an extremely large number of candidate mediators is to run mediator pre-selection with the fast first step of the HDMAX2 approach.

We employed the strategy combining the first step of HDMAX2 followed by MAHI to detect and assess the role of DNA CpG site methylation in mediating the impact of smoking on the occurrence of rheumatoid arthritis and  identified $8$ significant probes. 
Remarkably, one of the $8$ selected probes was associated with the \textit{CD38} gene, which shows a strong association with RA in PubMed research, with 147 hits. CD38 is important in the regulation of innate immunity \cite{ye_potential_2023} and has already been identified as a potential therapeutical target for autoimmune diseases such as RA, but also systemic lupus or multiple sclerosis \cite{Peclat2020-ua}.

An interesting feature of the multiple mediation framework presented in this article is that it is possible to extend the loss functions considered in Sections~\ref{sect_method} to incorporate user-defined partitions of the candidate mediators. This extension would allow for the integration of existing domain knowledge into the model. Moreover, it would also be possible to include multiple treatments. By adapting the first step of MAHI to this more general case, the selection process could be enhanced to favor groups of mediators that exhibit a common mediated effect across all treatments. The ability to incorporate existing knowledge about the structure of candidate mediators would be especially valuable in genomic applications, where the focus is frequently on evaluating the mediated effects of specific genomic regions. Note that \cite{djordjilovic_global_2019} had already proposed a multiple testing procedure to determine which groups of variables had a significant mediating effect. However, such a MAHI extension would be to our knowledge the first screening method capable of taking group structure into account, as well as considering several treatments simultaneously and promoting the selection of common mediators. 
This interesting features would allow to select candidate mediators with mediated effects with respect to all exposures and to discharge intermediate variables that act as mediators only with respect to some of the exposures. However, in this situation the interpretation of the set of coefficients $\balpha$ and $\bbeta$ in terms of mediated effect is not straightforward and needs further investigation.  We leave this extension and its validation through simulation studies for future works.

Several methodological questions remain open and constitute challenging tasks for the future. Notably, it would be interesting to adapt MAHI to other types of data, in particular to longitudinal data and/or survival models. A second major question concerns the robustness of the method to violations of the conditional independence conditions, which are crucial for the identification of mediated effects (see, for instance, \cite{jerolon_causal_2020}). To the best of our knowledge, such a sensitivity analysis framework has not yet been developed in the context of high-dimensional mediation analysis.

\subsubsection*{Method availability}
The MAHI method is available in its beta version as an R package at
\texttt{https://github.com/AllanJe/mahi}.

\subsubsection*{Declaration of generative AI and AI-assisted technologies in the writing process}
During the preparation of this work the authors used Le Chat - Mistral AI in order to revise the language of some portions of the manuscript. After using this tool, the authors reviewed and edited the content as needed and take full responsibility for the content of the article.

\bibliography{bibliog.bib}


\begin{thebibliography}{40}
\ifx \bisbn   \undefined \def \bisbn  #1{ISBN #1}\fi
\ifx \binits  \undefined \def \binits#1{#1}\fi
\ifx \bauthor  \undefined \def \bauthor#1{#1}\fi
\ifx \batitle  \undefined \def \batitle#1{#1}\fi
\ifx \bjtitle  \undefined \def \bjtitle#1{#1}\fi
\ifx \bvolume  \undefined \def \bvolume#1{\textbf{#1}}\fi
\ifx \byear  \undefined \def \byear#1{#1}\fi
\ifx \bissue  \undefined \def \bissue#1{#1}\fi
\ifx \bfpage  \undefined \def \bfpage#1{#1}\fi
\ifx \blpage  \undefined \def \blpage #1{#1}\fi
\ifx \burl  \undefined \def \burl#1{\textsf{#1}}\fi
\ifx \doiurl  \undefined \def \doiurl#1{\url{https://doi.org/#1}}\fi
\ifx \betal  \undefined \def \betal{\textit{et al.}}\fi
\ifx \binstitute  \undefined \def \binstitute#1{#1}\fi
\ifx \binstitutionaled  \undefined \def \binstitutionaled#1{#1}\fi
\ifx \bctitle  \undefined \def \bctitle#1{#1}\fi
\ifx \beditor  \undefined \def \beditor#1{#1}\fi
\ifx \bpublisher  \undefined \def \bpublisher#1{#1}\fi
\ifx \bbtitle  \undefined \def \bbtitle#1{#1}\fi
\ifx \bedition  \undefined \def \bedition#1{#1}\fi
\ifx \bseriesno  \undefined \def \bseriesno#1{#1}\fi
\ifx \blocation  \undefined \def \blocation#1{#1}\fi
\ifx \bsertitle  \undefined \def \bsertitle#1{#1}\fi
\ifx \bsnm \undefined \def \bsnm#1{#1}\fi
\ifx \bsuffix \undefined \def \bsuffix#1{#1}\fi
\ifx \bparticle \undefined \def \bparticle#1{#1}\fi
\ifx \barticle \undefined \def \barticle#1{#1}\fi
\bibcommenthead
\ifx \bconfdate \undefined \def \bconfdate #1{#1}\fi
\ifx \botherref \undefined \def \botherref #1{#1}\fi
\ifx \url \undefined \def \url#1{\textsf{#1}}\fi
\ifx \bchapter \undefined \def \bchapter#1{#1}\fi
\ifx \bbook \undefined \def \bbook#1{#1}\fi
\ifx \bcomment \undefined \def \bcomment#1{#1}\fi
\ifx \oauthor \undefined \def \oauthor#1{#1}\fi
\ifx \citeauthoryear \undefined \def \citeauthoryear#1{#1}\fi
\ifx \endbibitem  \undefined \def \endbibitem {}\fi
\ifx \bconflocation  \undefined \def \bconflocation#1{#1}\fi
\ifx \arxivurl  \undefined \def \arxivurl#1{\textsf{#1}}\fi
\csname PreBibitemsHook\endcsname

\bibitem[\protect\citeauthoryear{Pearl}{2001}]{pearl_direct_2001}
\begin{bchapter}
\bauthor{\bsnm{Pearl}, \binits{J.}}:
\bctitle{Direct and {Indirect} {Effects}}.
In: \bbtitle{Proceedings of the {Seventeenth} {Conference} on {Uncertainty} in
  {Artificial} {Intelligence}}.
\bsertitle{{UAI}'01},
pp. \bfpage{411}--\blpage{420}.
\bpublisher{Morgan Kaufmann Publishers Inc.},
\blocation{San Francisco, CA, USA}
(\byear{2001})
\end{bchapter}
\endbibitem

\bibitem[\protect\citeauthoryear{Imaï et~al.}{2010}]{imai_general_2010}
\begin{barticle}
\bauthor{\bsnm{Imaï}, \binits{K.}},
\bauthor{\bsnm{Keele}, \binits{L.}},
\bauthor{\bsnm{Tingley}, \binits{D.}}:
\batitle{A general approach to causal mediation analysis.}
\bjtitle{Psychological Methods}
\bvolume{15}(\bissue{4}),
\bfpage{309}--\blpage{334}
(\byear{2010})
\end{barticle}
\endbibitem

\bibitem[\protect\citeauthoryear{Blum et~al.}{2020}]{blum2020challenges}
\begin{barticle}
\bauthor{\bsnm{Blum}, \binits{M.G.}},
\bauthor{\bsnm{Valeri}, \binits{L.}},
\bauthor{\bsnm{Fran{\c{c}}ois}, \binits{O.}},
\bauthor{\bsnm{Cadiou}, \binits{S.}},
\bauthor{\bsnm{Siroux}, \binits{V.}},
\bauthor{\bsnm{Lepeule}, \binits{J.}},
\bauthor{\bsnm{Slama}, \binits{R.}}:
\batitle{Challenges raised by mediation analysis in a high-dimension setting}.
\bjtitle{Environmental health perspectives}
\bvolume{128}(\bissue{5}),
\bfpage{055001}
(\byear{2020})
\end{barticle}
\endbibitem

\bibitem[\protect\citeauthoryear{Han et~al.}{2023}]{han2023mediation}
\begin{barticle}
\bauthor{\bsnm{Han}, \binits{Q.}},
\bauthor{\bsnm{Wang}, \binits{Y.}},
\bauthor{\bsnm{Sun}, \binits{N.}},
\bauthor{\bsnm{Chu}, \binits{J.}},
\bauthor{\bsnm{Hu}, \binits{W.}},
\bauthor{\bsnm{Shen}, \binits{Y.}}:
\batitle{Mediation analysis method review of high throughput data}.
\bjtitle{Statistical Applications in Genetics and Molecular Biology}
\bvolume{22}(\bissue{1}),
\bfpage{20230031}
(\byear{2023})
\end{barticle}
\endbibitem

\bibitem[\protect\citeauthoryear{Huang and Pan}{2016}]{huang_hypothesis_2016}
\begin{barticle}
\bauthor{\bsnm{Huang}, \binits{Y.-T.}},
\bauthor{\bsnm{Pan}, \binits{W.-C.}}:
\batitle{Hypothesis test of mediation effect in causal mediation model with
  high-dimensional continuous mediators: {Hypothesis} {Test} of {Mediation}
  {Effect} in {Causal} {Mediation} {Model} with {High}-{Dimensional}
  {Continuous} {Mediators}}.
\bjtitle{Biometrics}
\bvolume{72}(\bissue{2}),
\bfpage{402}--\blpage{413}
(\byear{2016})
\doiurl{10.1111/biom.12421} .
Accessed 2020-06-03
\end{barticle}
\endbibitem

\bibitem[\protect\citeauthoryear{Han et~al.}{2020}]{han_sparse_2020}
\begin{barticle}
\bauthor{\bsnm{Han}, \binits{X.}},
\bauthor{\bsnm{Peng}, \binits{J.}},
\bauthor{\bsnm{Cui}, \binits{A.}},
\bauthor{\bsnm{Zhao}, \binits{F.}}:
\batitle{Sparse {Principal} {Component} {Analysis} via {Fractional} {Function}
  {Regularity}}.
\bjtitle{Mathematical Problems in Engineering}
\bvolume{2020},
\bfpage{1}--\blpage{10}
(\byear{2020})
\doiurl{10.1155/2020/7874140} .
Accessed 2021-06-03
\end{barticle}
\endbibitem

\bibitem[\protect\citeauthoryear{Assi et~al.}{2015}]{assi_statistical_2015}
\begin{botherref}
\oauthor{\bsnm{Assi}, \binits{N.}},
\oauthor{\bsnm{Fages}, \binits{A.}},
\oauthor{\bsnm{Vineis}, \binits{P.}},
\oauthor{\bsnm{Chadeau-Hyam}, \binits{M.}},
\oauthor{\bsnm{Stepien}, \binits{M.}},
\oauthor{\bsnm{Duarte-Salles}, \binits{T.}},
\oauthor{\bsnm{Byrnes}, \binits{G.}},
\oauthor{\bsnm{Boumaza}, \binits{H.}},
\oauthor{\bsnm{Knüppel}, \binits{S.}},
\oauthor{\bsnm{Kühn}, \binits{T.}},
\oauthor{\bsnm{Palli}, \binits{D.}},
\oauthor{\bsnm{Bamia}, \binits{C.}},
\oauthor{\bsnm{Boshuizen}, \binits{H.}},
\oauthor{\bsnm{Bonet}, \binits{C.}},
\oauthor{\bsnm{Overvad}, \binits{K.}},
\oauthor{\bsnm{Johansson}, \binits{M.}},
\oauthor{\bsnm{Travis}, \binits{R.}},
\oauthor{\bsnm{Gunter}, \binits{M.J.}},
\oauthor{\bsnm{Lund}, \binits{E.}},
\oauthor{\bsnm{Dossus}, \binits{L.}},
\oauthor{\bsnm{Elena-Herrmann}, \binits{B.}},
\oauthor{\bsnm{Riboli}, \binits{E.}},
\oauthor{\bsnm{Jenab}, \binits{M.}},
\oauthor{\bsnm{Viallon}, \binits{V.}},
\oauthor{\bsnm{Ferrari}, \binits{P.}}:
A statistical framework to model the meeting-in-the-middle principle using
  metabolomic data: application to hepatocellular carcinoma in the {EPIC}
  study.
Mutagenesis,
045
(2015)
\doiurl{10.1093/mutage/gev045} .
Accessed 2021-06-03
\end{botherref}
\endbibitem

\bibitem[\protect\citeauthoryear{Chén
  et~al.}{2018}]{chen_high-dimensional_2018}
\begin{barticle}
\bauthor{\bsnm{Chén}, \binits{O.Y.}},
\bauthor{\bsnm{Crainiceanu}, \binits{C.}},
\bauthor{\bsnm{Ogburn}, \binits{E.L.}},
\bauthor{\bsnm{Caffo}, \binits{B.S.}},
\bauthor{\bsnm{Wager}, \binits{T.D.}},
\bauthor{\bsnm{Lindquist}, \binits{M.A.}}:
\batitle{High-dimensional multivariate mediation with application to
  neuroimaging data}.
\bjtitle{Biostatistics}
\bvolume{19}(\bissue{2}),
\bfpage{121}--\blpage{136}
(\byear{2018})
\doiurl{10.1093/biostatistics/kxx027} .
Accessed 2020-06-02
\end{barticle}
\endbibitem

\bibitem[\protect\citeauthoryear{Geuter et~al.}{2020}]{geuter2020multiple}
\begin{barticle}
\bauthor{\bsnm{Geuter}, \binits{S.}},
\bauthor{\bsnm{Reynolds~Losin}, \binits{E.A.}},
\bauthor{\bsnm{Roy}, \binits{M.}},
\bauthor{\bsnm{Atlas}, \binits{L.Y.}},
\bauthor{\bsnm{Schmidt}, \binits{L.}},
\bauthor{\bsnm{Krishnan}, \binits{A.}},
\bauthor{\bsnm{Koban}, \binits{L.}},
\bauthor{\bsnm{Wager}, \binits{T.D.}},
\bauthor{\bsnm{Lindquist}, \binits{M.A.}}:
\batitle{Multiple brain networks mediating stimulus--pain relationships in
  humans}.
\bjtitle{Cerebral Cortex}
\bvolume{30}(\bissue{7}),
\bfpage{4204}--\blpage{4219}
(\byear{2020})
\end{barticle}
\endbibitem

\bibitem[\protect\citeauthoryear{van Kesteren and
  Oberski}{2019}]{van_kesteren_exploratory_2019}
\begin{barticle}
\bauthor{\bsnm{Kesteren}, \binits{E.-J.}},
\bauthor{\bsnm{Oberski}, \binits{D.L.}}:
\batitle{Exploratory {Mediation} {Analysis} with {Many} {Potential}
  {Mediators}}.
\bjtitle{Structural Equation Modeling: A Multidisciplinary Journal}
\bvolume{26}(\bissue{5}),
\bfpage{710}--\blpage{723}
(\byear{2019})
\doiurl{10.1080/10705511.2019.1588124} .
Accessed 2020-06-02
\end{barticle}
\endbibitem

\bibitem[\protect\citeauthoryear{Derkach et~al.}{2019}]{derkach_high_2019}
\begin{barticle}
\bauthor{\bsnm{Derkach}, \binits{A.}},
\bauthor{\bsnm{Pfeiffer}, \binits{R.M.}},
\bauthor{\bsnm{Chen}, \binits{T.}},
\bauthor{\bsnm{Sampson}, \binits{J.N.}}:
\batitle{High dimensional mediation analysis with latent variables}.
\bjtitle{Biometrics}
\bvolume{75}(\bissue{3}),
\bfpage{745}--\blpage{756}
(\byear{2019})
\doiurl{10.1111/biom.13053} .
Accessed 2021-06-03
\end{barticle}
\endbibitem

\bibitem[\protect\citeauthoryear{Song et~al.}{2018}]{song_bayesian_2018}
\begin{botherref}
\oauthor{\bsnm{Song}, \binits{Y.}},
\oauthor{\bsnm{Zhou}, \binits{X.}},
\oauthor{\bsnm{Zhang}, \binits{M.}},
\oauthor{\bsnm{Zhao}, \binits{W.}},
\oauthor{\bsnm{Liu}, \binits{Y.}},
\oauthor{\bsnm{Kardia}, \binits{S.L.R.}},
\oauthor{\bsnm{Diez~Roux}, \binits{A.V.}},
\oauthor{\bsnm{Needham}, \binits{B.L.}},
\oauthor{\bsnm{Smith}, \binits{J.A.}},
\oauthor{\bsnm{Mukherjee}, \binits{B.}}:
Bayesian {Shrinkage} {Estimation} of {High} {Dimensional} {Causal} {Mediation}
  {Effects} in {Omics} {Studies}.
preprint,
Epidemiology
(November 2018).
\doiurl{10.1101/467399} .
\url{http://biorxiv.org/lookup/doi/10.1101/467399}
Accessed 2020-06-02
\end{botherref}
\endbibitem

\bibitem[\protect\citeauthoryear{Jumentier et~al.}{2023}]{jumentier2023high}
\begin{barticle}
\bauthor{\bsnm{Jumentier}, \binits{B.}},
\bauthor{\bsnm{Barrot}, \binits{C.-C.}},
\bauthor{\bsnm{Estavoyer}, \binits{M.}},
\bauthor{\bsnm{Tost}, \binits{J.}},
\bauthor{\bsnm{Heude}, \binits{B.}},
\bauthor{\bsnm{Fran{\c{c}}ois}, \binits{O.}},
\bauthor{\bsnm{Lepeule}, \binits{J.}}:
\batitle{High-dimensional mediation analysis: a new method applied to maternal
  smoking, placental dna methylation, and birth outcomes}.
\bjtitle{Environmental Health Perspectives}
\bvolume{131}(\bissue{4}),
\bfpage{047011}
(\byear{2023})
\end{barticle}
\endbibitem

\bibitem[\protect\citeauthoryear{Djordjilovi{\'c}
  et~al.}{2019}]{djordjilovic2019global}
\begin{barticle}
\bauthor{\bsnm{Djordjilovi{\'c}}, \binits{V.}},
\bauthor{\bsnm{Page}, \binits{C.M.}},
\bauthor{\bsnm{Gran}, \binits{J.M.}},
\bauthor{\bsnm{N{\o}st}, \binits{T.H.}},
\bauthor{\bsnm{Sandanger}, \binits{T.M.}},
\bauthor{\bsnm{Veier{\o}d}, \binits{M.B.}},
\bauthor{\bsnm{Thoresen}, \binits{M.}}:
\batitle{Global test for high-dimensional mediation: Testing groups of
  potential mediators}.
\bjtitle{Statistics in medicine}
\bvolume{38}(\bissue{18}),
\bfpage{3346}--\blpage{3360}
(\byear{2019})
\end{barticle}
\endbibitem

\bibitem[\protect\citeauthoryear{Dai et~al.}{2022}]{dai2022multiple}
\begin{barticle}
\bauthor{\bsnm{Dai}, \binits{J.Y.}},
\bauthor{\bsnm{Stanford}, \binits{J.L.}},
\bauthor{\bsnm{LeBlanc}, \binits{M.}}:
\batitle{A multiple-testing procedure for high-dimensional mediation
  hypotheses}.
\bjtitle{Journal of the American Statistical Association}
\bvolume{117}(\bissue{537}),
\bfpage{198}--\blpage{213}
(\byear{2022})
\end{barticle}
\endbibitem

\bibitem[\protect\citeauthoryear{Dai et~al.}{2024}]{dai2024controlling}
\begin{barticle}
\bauthor{\bsnm{Dai}, \binits{R.}},
\bauthor{\bsnm{Li}, \binits{R.}},
\bauthor{\bsnm{Lee}, \binits{S.}},
\bauthor{\bsnm{Liu}, \binits{Y.}}:
\batitle{Controlling false discovery rate for mediator selection in
  high-dimensional data}.
\bjtitle{Biometrics}
\bvolume{80}(\bissue{3}),
\bfpage{064}
(\byear{2024})
\end{barticle}
\endbibitem

\bibitem[\protect\citeauthoryear{Zhang et~al.}{2016}]{zhang_estimating_2016}
\begin{barticle}
\bauthor{\bsnm{Zhang}, \binits{H.}},
\bauthor{\bsnm{Zheng}, \binits{Y.}},
\bauthor{\bsnm{Zhang}, \binits{Z.}},
\bauthor{\bsnm{Gao}, \binits{T.}},
\bauthor{\bsnm{Joyce}, \binits{B.}},
\bauthor{\bsnm{Yoon}, \binits{G.}},
\bauthor{\bsnm{Zhang}, \binits{W.}},
\bauthor{\bsnm{Schwartz}, \binits{J.}},
\bauthor{\bsnm{Just}, \binits{A.}},
\bauthor{\bsnm{Colicino}, \binits{E.}},
\bauthor{\bsnm{Vokonas}, \binits{P.}},
\bauthor{\bsnm{Zhao}, \binits{L.}},
\bauthor{\bsnm{Lv}, \binits{J.}},
\bauthor{\bsnm{Baccarelli}, \binits{A.}},
\bauthor{\bsnm{Hou}, \binits{L.}},
\bauthor{\bsnm{Liu}, \binits{L.}}:
\batitle{Estimating and testing high-dimensional mediation effects in
  epigenetic studies}.
\bjtitle{Bioinformatics}
\bvolume{32}(\bissue{20}),
\bfpage{3150}--\blpage{3154}
(\byear{2016})
\doiurl{10.1093/bioinformatics/btw351} .
Accessed 2020-06-02
\end{barticle}
\endbibitem

\bibitem[\protect\citeauthoryear{Perera et~al.}{2022}]{perera2022hima2}
\begin{barticle}
\bauthor{\bsnm{Perera}, \binits{C.}},
\bauthor{\bsnm{Zhang}, \binits{H.}},
\bauthor{\bsnm{Zheng}, \binits{Y.}},
\bauthor{\bsnm{Hou}, \binits{L.}},
\bauthor{\bsnm{Qu}, \binits{A.}},
\bauthor{\bsnm{Zheng}, \binits{C.}},
\bauthor{\bsnm{Xie}, \binits{K.}},
\bauthor{\bsnm{Liu}, \binits{L.}}:
\batitle{Hima2: high-dimensional mediation analysis and its application in
  epigenome-wide dna methylation data}.
\bjtitle{BMC bioinformatics}
\bvolume{23}(\bissue{1}),
\bfpage{296}
(\byear{2022})
\end{barticle}
\endbibitem

\bibitem[\protect\citeauthoryear{Loh et~al.}{2020}]{loh_non-linear_2020}
\begin{botherref}
\oauthor{\bsnm{Loh}, \binits{W.W.}},
\oauthor{\bsnm{Moerkerke}, \binits{B.}},
\oauthor{\bsnm{Loeys}, \binits{T.}},
\oauthor{\bsnm{Vansteelandt}, \binits{S.}}:
Non-linear {Mediation} {Analysis} with {High}-dimensional {Mediators} whose
  {Causal} {Structure} is {Unknown}.
arXiv:2001.07147 [stat]
(2020).
arXiv: 2001.07147.
Accessed 2020-06-02
\end{botherref}
\endbibitem

\bibitem[\protect\citeauthoryear{Jérolon et~al.}{2020}]{jerolon_causal_2020}
\begin{botherref}
\oauthor{\bsnm{Jérolon}, \binits{A.}},
\oauthor{\bsnm{Baglietto}, \binits{L.}},
\oauthor{\bsnm{Birmelé}, \binits{E.}},
\oauthor{\bsnm{Alarcon}, \binits{F.}},
\oauthor{\bsnm{Perduca}, \binits{V.}}:
Causal mediation analysis in presence of multiple mediators uncausally related.
The International Journal of Biostatistics
\textbf{0}(0)
(2020)
\doiurl{10.1515/ijb-2019-0088} .
Accessed 2020-10-22
\end{botherref}
\endbibitem

\bibitem[\protect\citeauthoryear{Yuan and Lin}{2006}]{yuan_model_2006}
\begin{barticle}
\bauthor{\bsnm{Yuan}, \binits{M.}},
\bauthor{\bsnm{Lin}, \binits{Y.}}:
\batitle{Model selection and estimation in regression with grouped variables}.
\bjtitle{Journal of the Royal Statistical Society: Series B (Statistical
  Methodology)}
\bvolume{68}(\bissue{1}),
\bfpage{49}--\blpage{67}
(\byear{2006})
\doiurl{10.1111/j.1467-9868.2005.00532.x} .
Accessed 2021-06-03
\end{barticle}
\endbibitem

\bibitem[\protect\citeauthoryear{Meier et~al.}{2008}]{meier_group_2008}
\begin{barticle}
\bauthor{\bsnm{Meier}, \binits{L.}},
\bauthor{\bsnm{Van De~Geer}, \binits{S.}},
\bauthor{\bsnm{Bühlmann}, \binits{P.}}:
\batitle{The group lasso for logistic regression: {Group} {Lasso} for
  {Logistic} {Regression}}.
\bjtitle{Journal of the Royal Statistical Society: Series B (Statistical
  Methodology)}
\bvolume{70}(\bissue{1}),
\bfpage{53}--\blpage{71}
(\byear{2008})
\doiurl{10.1111/j.1467-9868.2007.00627.x} .
Accessed 2021-06-03
\end{barticle}
\endbibitem

\bibitem[\protect\citeauthoryear{Bach et~al.}{2011}]{bach2011optimization}
\begin{botherref}
\oauthor{\bsnm{Bach}, \binits{F.}},
\oauthor{\bsnm{Jenatton}, \binits{R.}},
\oauthor{\bsnm{Mairal}, \binits{J.}},
\oauthor{\bsnm{Obozinski}, \binits{G.}}:
Optimization with sparsity-inducing penalties.
arXiv preprint arXiv:1108.0775
(2011)
\end{botherref}
\endbibitem

\bibitem[\protect\citeauthoryear{Meinshausen and
  Bühlmann}{2010}]{meinshausen_stability_2010}
\begin{barticle}
\bauthor{\bsnm{Meinshausen}, \binits{N.}},
\bauthor{\bsnm{Bühlmann}, \binits{P.}}:
\batitle{Stability selection: {Stability} {Selection}}.
\bjtitle{Journal of the Royal Statistical Society: Series B (Statistical
  Methodology)}
\bvolume{72}(\bissue{4}),
\bfpage{417}--\blpage{473}
(\byear{2010})
\doiurl{10.1111/j.1467-9868.2010.00740.x} .
Accessed 2020-06-01
\end{barticle}
\endbibitem

\bibitem[\protect\citeauthoryear{Bujang et~al.}{2018}]{sample_2018}
\begin{barticle}
\bauthor{\bsnm{Bujang}, \binits{M.A.}},
\bauthor{\bsnm{Sa’at}, \binits{N.}},
\bauthor{\bsnm{{Biostatistics Unit, National Clinical Research Centre, Ministry
  of Health, Kuala Lumpur, Malaysia}}},
\bauthor{\bsnm{Tg~Abu Bakar~Sidik}, \binits{T.M.I.}},
\bauthor{\bsnm{{Biostatistics Unit, National Clinical Research Centre, Ministry
  of Health, Kuala Lumpur, Malaysia}}},
\bauthor{\bsnm{Chien~Joo}, \binits{L.}},
\bauthor{\bsnm{{Clinical Research Centre, Sarawak General Hospital, Ministry of
  Health, Kuching, Malaysia}}}:
\batitle{Sample {Size} {Guidelines} for {Logistic} {Regression} from
  {Observational} {Studies} with {Large} {Population}: {Emphasis} on the
  {Accuracy} {Between} {Statistics} and {Parameters} {Based} on {Real} {Life}
  {Clinical} {Data}}.
\bjtitle{Malaysian Journal of Medical Sciences}
\bvolume{25}(\bissue{4}),
\bfpage{122}--\blpage{130}
(\byear{2018})
\doiurl{10.21315/mjms2018.25.4.12} .
Accessed 2022-10-19
\end{barticle}
\endbibitem

\bibitem[\protect\citeauthoryear{Pittion et~al.}{2025}]{pittion2025published}
\begin{botherref}
\oauthor{\bsnm{Pittion}, \binits{F.}},
\oauthor{\bsnm{Jumentier}, \binits{B.}},
\oauthor{\bsnm{Nakamura}, \binits{A.}},
\oauthor{\bsnm{Lepeule}, \binits{J.}},
\oauthor{\bsnm{Fran{\c c}ois}, \binits{O.}},
\oauthor{\bsnm{Richard}, \binits{M.}}:
hdmax2, an {R} package to perform high dimension mediation analysis.
Peer Community Journal
\textbf{5}(e107)
(2025)
\end{botherref}
\endbibitem

\bibitem[\protect\citeauthoryear{Imaï and
  Yamamoto}{2013}]{imai_identification_2013}
\begin{barticle}
\bauthor{\bsnm{Imaï}, \binits{K.}},
\bauthor{\bsnm{Yamamoto}, \binits{T.}}:
\batitle{Identification and {Sensitivity} {Analysis} for {Multiple} {Causal}
  {Mechanisms}: {Revisiting} {Evidence} from {Framing} {Experiments}}.
\bjtitle{Political Analysis}
\bvolume{21}(\bissue{02}),
\bfpage{141}--\blpage{171}
(\byear{2013})
\end{barticle}
\endbibitem

\bibitem[\protect\citeauthoryear{Tingley et~al.}{2014}]{JSSv059i05}
\begin{barticle}
\bauthor{\bsnm{Tingley}, \binits{D.}},
\bauthor{\bsnm{Yamamoto}, \binits{T.}},
\bauthor{\bsnm{Hirose}, \binits{K.}},
\bauthor{\bsnm{Keele}, \binits{L.}},
\bauthor{\bsnm{Imai}, \binits{K.}}:
\batitle{mediation: R package for causal mediation analysis}.
\bjtitle{Journal of Statistical Software}
\bvolume{59}(\bissue{5}),
\bfpage{1}--\blpage{38}
(\byear{2014})
\doiurl{10.18637/jss.v059.i05}
\end{barticle}
\endbibitem

\bibitem[\protect\citeauthoryear{Zhang}{2010}]{zhang_nearly_2010}
\begin{barticle}
\bauthor{\bsnm{Zhang}, \binits{C.-H.}}:
\batitle{Nearly unbiased variable selection under minimax concave penalty}.
\bjtitle{The Annals of Statistics}
\bvolume{38}(\bissue{2}),
\bfpage{894}--\blpage{942}
(\byear{2010})
\doiurl{10.1214/09-AOS729} .
Accessed 2020-05-15
\end{barticle}
\endbibitem

\bibitem[\protect\citeauthoryear{Fan and Lv}{2008}]{fan_sure_2008}
\begin{barticle}
\bauthor{\bsnm{Fan}, \binits{J.}},
\bauthor{\bsnm{Lv}, \binits{J.}}:
\batitle{Sure independence screening for ultrahigh dimensional feature space}.
\bjtitle{Journal of the Royal Statistical Society: Series B (Statistical
  Methodology)}
\bvolume{70}(\bissue{5}),
\bfpage{849}--\blpage{911}
(\byear{2008})
\doiurl{10.1111/j.1467-9868.2008.00674.x} .
Accessed 2020-05-15
\end{barticle}
\endbibitem

\bibitem[\protect\citeauthoryear{Gao et~al.}{2019}]{gao_testing_2019}
\begin{barticle}
\bauthor{\bsnm{Gao}, \binits{Y.}},
\bauthor{\bsnm{Yang}, \binits{H.}},
\bauthor{\bsnm{Fang}, \binits{R.}},
\bauthor{\bsnm{Zhang}, \binits{Y.}},
\bauthor{\bsnm{Goode}, \binits{E.L.}},
\bauthor{\bsnm{Cui}, \binits{Y.}}:
\batitle{Testing {Mediation} {Effects} in {High}-{Dimensional} {Epigenetic}
  {Studies}}.
\bjtitle{Frontiers in Genetics}
\bvolume{10},
\bfpage{1195}
(\byear{2019})
\doiurl{10.3389/fgene.2019.01195} .
Accessed 2020-06-02
\end{barticle}
\endbibitem

\bibitem[\protect\citeauthoryear{Zhao et~al.}{2020}]{zhao_sparse_2020}
\begin{barticle}
\bauthor{\bsnm{Zhao}, \binits{Y.}},
\bauthor{\bsnm{Lindquist}, \binits{M.A.}},
\bauthor{\bsnm{Caffo}, \binits{B.S.}}:
\batitle{Sparse principal component based high-dimensional mediation analysis}.
\bjtitle{Computational Statistics \& Data Analysis}
\bvolume{142},
\bfpage{106835}
(\byear{2020})
\doiurl{10.1016/j.csda.2019.106835} .
Accessed 2021-07-08
\end{barticle}
\endbibitem

\bibitem[\protect\citeauthoryear{Jumentier
  et~al.}{}]{jumentier_high-dimensional_nodate}
\begin{botherref}
\oauthor{\bsnm{Jumentier}, \binits{B.}},
\oauthor{\bsnm{Barrot}, \binits{C.-C.}},
\oauthor{\bsnm{Estavoyer}, \binits{M.}},
\oauthor{\bsnm{Tost}, \binits{J.}},
\oauthor{\bsnm{Heude}, \binits{B.}},
\oauthor{\bsnm{François}, \binits{O.}},
\oauthor{\bsnm{Lepeule}, \binits{J.}}:
High-{Dimensional} {Mediation} {Analysis}: {A} {New} {Method} {Applied} to
  {Maternal} {Smoking}, {Placental} {DNA} {Methylation}, and {Birth}
  {Outcomes}.
Environmental Health Perspectives
\textbf{131}(4),
047011
\doiurl{10.1289/EHP11559} .
Publisher: Environmental Health Perspectives.
Accessed 2023-09-19
\end{botherref}
\endbibitem

\bibitem[\protect\citeauthoryear{Caye et~al.}{2019}]{caye2019lfmm}
\begin{barticle}
\bauthor{\bsnm{Caye}, \binits{K.}},
\bauthor{\bsnm{Jumentier}, \binits{B.}},
\bauthor{\bsnm{Lepeule}, \binits{J.}},
\bauthor{\bsnm{Fran{\c{c}}ois}, \binits{O.}}:
\batitle{Lfmm 2: fast and accurate inference of gene-environment associations
  in genome-wide studies}.
\bjtitle{Molecular biology and evolution}
\bvolume{36}(\bissue{4}),
\bfpage{852}--\blpage{860}
(\byear{2019})
\end{barticle}
\endbibitem

\bibitem[\protect\citeauthoryear{Chang et~al.}{2014}]{Chang2014-gv}
\begin{barticle}
\bauthor{\bsnm{Chang}, \binits{K.}},
\bauthor{\bsnm{Yang}, \binits{S.M.}},
\bauthor{\bsnm{Kim}, \binits{S.H.}},
\bauthor{\bsnm{Han}, \binits{K.H.}},
\bauthor{\bsnm{Park}, \binits{S.J.}},
\bauthor{\bsnm{Shin}, \binits{J.I.}}:
\batitle{Smoking and rheumatoid arthritis}.
\bjtitle{Int J Mol Sci}
\bvolume{15}(\bissue{12}),
\bfpage{22279}--\blpage{22295}
(\byear{2014})
\end{barticle}
\endbibitem

\bibitem[\protect\citeauthoryear{Liu et~al.}{2013}]{Liu2013-bv}
\begin{barticle}
\bauthor{\bsnm{Liu}, \binits{Y.}},
\bauthor{\bsnm{Aryee}, \binits{M.J.}},
\bauthor{\bsnm{Padyukov}, \binits{L.}},
\bauthor{\bsnm{Fallin}, \binits{M.D.}},
\bauthor{\bsnm{Hesselberg}, \binits{E.}},
\bauthor{\bsnm{Runarsson}, \binits{A.}},
\bauthor{\bsnm{Reinius}, \binits{L.}},
\bauthor{\bsnm{Acevedo}, \binits{N.}},
\bauthor{\bsnm{Taub}, \binits{M.}},
\bauthor{\bsnm{Ronninger}, \binits{M.}},
\bauthor{\bsnm{Shchetynsky}, \binits{K.}},
\bauthor{\bsnm{Scheynius}, \binits{A.}},
\bauthor{\bsnm{Kere}, \binits{J.}},
\bauthor{\bsnm{Alfredsson}, \binits{L.}},
\bauthor{\bsnm{Klareskog}, \binits{L.}},
\bauthor{\bsnm{Ekstr{\"o}m}, \binits{T.J.}},
\bauthor{\bsnm{Feinberg}, \binits{A.P.}}:
\batitle{Epigenome-wide association data implicate {DNA} methylation as an
  intermediary of genetic risk in rheumatoid arthritis}.
\bjtitle{Nat Biotechnol}
\bvolume{31}(\bissue{2}),
\bfpage{142}--\blpage{147}
(\byear{2013})
\end{barticle}
\endbibitem

\bibitem[\protect\citeauthoryear{Kaur et~al.}{2019}]{Kaur2019-cz}
\begin{barticle}
\bauthor{\bsnm{Kaur}, \binits{G.}},
\bauthor{\bsnm{Begum}, \binits{R.}},
\bauthor{\bsnm{Thota}, \binits{S.}},
\bauthor{\bsnm{Batra}, \binits{S.}}:
\batitle{A systematic review of smoking-related epigenetic alterations}.
\bjtitle{Arch Toxicol}
\bvolume{93}(\bissue{10}),
\bfpage{2715}--\blpage{2740}
(\byear{2019})
\end{barticle}
\endbibitem

\bibitem[\protect\citeauthoryear{Ye et~al.}{2023}]{ye_potential_2023}
\begin{barticle}
\bauthor{\bsnm{Ye}, \binits{X.}},
\bauthor{\bsnm{Zhao}, \binits{Y.}},
\bauthor{\bsnm{Ma}, \binits{W.}},
\bauthor{\bsnm{Ares}, \binits{I.}},
\bauthor{\bsnm{Martínez}, \binits{M.}},
\bauthor{\bsnm{Lopez-Torres}, \binits{B.}},
\bauthor{\bsnm{Martínez-Larrañaga}, \binits{M.-R.}},
\bauthor{\bsnm{Wang}, \binits{X.}},
\bauthor{\bsnm{Anadón}, \binits{A.}},
\bauthor{\bsnm{Martínez}, \binits{M.-A.}}:
\batitle{The potential of {CD38} protein as a target for autoimmune diseases}.
\bjtitle{Autoimmunity Reviews}
\bvolume{22}(\bissue{4}),
\bfpage{103289}
(\byear{2023})
\doiurl{10.1016/j.autrev.2023.103289} .
Accessed 2024-07-04
\end{barticle}
\endbibitem

\bibitem[\protect\citeauthoryear{Peclat et~al.}{2020}]{Peclat2020-ua}
\begin{barticle}
\bauthor{\bsnm{Peclat}, \binits{T.R.}},
\bauthor{\bsnm{Shi}, \binits{B.}},
\bauthor{\bsnm{Varga}, \binits{J.}},
\bauthor{\bsnm{Chini}, \binits{E.N.}}:
\batitle{The {NADase} enzyme {CD38}: an emerging pharmacological target for
  systemic sclerosis, systemic lupus erythematosus and rheumatoid arthritis}.
\bjtitle{Curr Opin Rheumatol}
\bvolume{32}(\bissue{6}),
\bfpage{488}--\blpage{496}
(\byear{2020})
\end{barticle}
\endbibitem

\bibitem[\protect\citeauthoryear{Djordjilović
  et~al.}{2019}]{djordjilovic_global_2019}
\begin{botherref}
\oauthor{\bsnm{Djordjilović}, \binits{V.}},
\oauthor{\bsnm{Page}, \binits{C.M.}},
\oauthor{\bsnm{Gran}, \binits{J.M.}},
\oauthor{\bsnm{Nøst}, \binits{T.H.}},
\oauthor{\bsnm{Sandanger}, \binits{T.M.}},
\oauthor{\bsnm{Veierød}, \binits{M.B.}},
\oauthor{\bsnm{Thoresen}, \binits{M.}}:
Global test for high‐dimensional mediation: {Testing} groups of potential
  mediators.
Statistics in Medicine,
8199
(2019)
\doiurl{10.1002/sim.8199} .
Accessed 2021-06-03
\end{botherref}
\endbibitem

\end{thebibliography}

\newpage

\begin{appendices}

\section{Theoretical details}\label{sec:theor_det}

We describe how we solve the optimization problem~(\ref{eq:optimization_pr}) with the proximal method. This method can be written, with
$v=(\balpha,\bbeta,\bgamma,\bxi,\bpsi)$ and $\Omega(v)=\sum_{k=1}^K \sqrt{\alpha_{1k}^2+\beta_k^2}$, as

\[  v^{t+1} = \underset{v}{\text{argmin}}  \quad f(v^t) + \langle\nabla f(v^t), v-v^t \rangle + \lambda \Omega(v) + \frac{L}{2} \|  v-v^t \|_2^2  \]

for a well-chosen $L$. It can also be rewritten as
\begin{align*}
  v^{t+1} & = \underset{v}{\text{argmin}}  \quad \frac{1}{2} \left\lVert v- \left(v^t -\frac{1}{L} \nabla f(v^t)\right) \right\rVert_2^2 + \frac{\lambda}{L} \Omega(v) \\
         &= \prox_{\frac{\lambda}{L}\Omega} \left(v^t -\frac{1}{L} \nabla f(v^t) \right)
\end{align*}

where the proximal operator is defined as
\[ \prox_{\mu \Omega} (u) =  \underset{v}{\text{argmin}}  \quad \frac{1}{2} \|v- u \|_2^2 + \mu \Omega(v). \]

When $\Omega$ is a group lasso penalty, the proximal operator is known.  In the present case, $v^{t+1}$ is obtained by replacing, for each $k=1,\ldots,K$, the coordinates of $v^t$ corresponding to $(\alpha_{1k}, \beta_{k})$ by 
$$
\max\left\{0,\left(1-\frac{\mu}{\|(\alpha_{1k},\beta_k)\|_2}\right)(\alpha_{1k}, \beta_{k})\right\}
$$

The choice of $L$ is again made according to \cite{bach2011optimization} by increasing it until the former proximal solution verifies
\[ f(v^{t+1})  \leq f(v^t) + \langle \nabla f(v^t), v^{t+1}-v^t \rangle + \frac{L}{2} \|  v^{t+1}-v^t \|_2^2. \]

\subsubsection*{Computing the gradient}

In order to run the proximal method to select a subset of candidate mediators, the only step still needed is to compute the gradient of the loss function, which is easily done by the following result.

\begin{theorem}
  Let  $\nabla_{\balpha} f$ (respectively $\nabla_{\bxi} f$) be the matrix regrouping all the partial derivatives $\frac{\partial f}{\partial \alpha_{pk}}$ (respectively $\frac{\partial f}{\partial \xi_{lk}}$). Similarly, denote by $\nabla_{\bbeta} f$,
  $\nabla_{\bgamma} f$ and $\nabla_{\bpsi} f$ the partial gradients relative to the $\beta_k$ , the $\gamma_p$ and the $\psi_l$ coefficients. Finally, let $\bTtilde$ be the matrix obtained by adding a column of $1$'s on the left of $\bT$ (i.e., with a slight abuse of notation, we introduce $\tilde{t}_{ip}$ such that, for all $1 \leq i \leq n$, $\tilde{t}_{i0}=1$ and $\tilde{t}_{i1}=t_{i1}$).  Then

  \begin{align*}
    \nabla_{\balpha} f &=  \frac{1}{n} \bTtilde' (\bMhat(\balpha,\bxi)-\bM)    \\
    \nabla_{\bxi} f &=  \frac{1}{n} \bX' (\bMhat(\balpha,\bxi)-\bM)    \\
    \nabla_{\bbeta} f  &=  \frac{w_Y}{n} \bM' (\bYhat(\bbeta,\bgamma,\bpsi) - \bY)  \\
    \nabla_{\bgamma} f &=  \frac{w_Y}{n} \bTtilde' (\bYhat(\bbeta,\bgamma,\bpsi) - \bY) \\
    \nabla_{\bpsi} f &=  \frac{w_Y}{n} \bX' (\bYhat(\bbeta,\bgamma,\bpsi) - \bY). 
  \end{align*}
  
\end{theorem}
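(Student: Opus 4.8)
The plan is to compute each partial gradient directly from the definition of $f$, exploiting the fact that $f$ splits as a sum over the $K$ mediator loss terms (each depending only on $\balpha$ and $\bxi$, in fact only on the $k$-th columns $\balpha_{\cdot k}$, $\bxi_{\cdot k}$) plus the single outcome loss term $\frac{w_Y}{n}\ell_Y$ (depending only on $\bbeta$, $\bgamma$, $\bpsi$). Consequently $\nabla_{\balpha}f$ and $\nabla_{\bxi}f$ come entirely from the $\ell_{M_k}$ terms, and $\nabla_{\bbeta}f$, $\nabla_{\bgamma}f$, $\nabla_{\bpsi}f$ entirely from $\ell_Y$; there is no cross-coupling to worry about. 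The key computational observation is the familiar fact that for \emph{both} the Gaussian square-loss and the logistic (cross-entropy) loss, the derivative of the per-observation loss with respect to a linear-predictor coefficient has the same generalized-linear-model form ``(fitted value $-$ observed value) $\times$ (covariate)''. This is exactly what makes the Gaussian and binary cases collapse into a single formula.

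First I would treat $\nabla_{\balpha}f$. Fix indices $p\in\{0,\dots,P\}$ and $k\in\{1,\dots,K\}$, so $\frac{\partial f}{\partial\alpha_{pk}} = \frac{w_k}{n}\frac{\partial \ell_{M_k}}{\partial\alpha_{pk}}$. In the Gaussian case $\ell_{M_k}=\frac12\sum_i(\hat m_{ik}-m_{ik})^2$ with $\hat m_{ik}$ linear in $\alpha_{pk}$ with coefficient $\tilde t_{ip}$, so $\frac{\partial\ell_{M_k}}{\partial\alpha_{pk}}=\sum_i(\hat m_{ik}-m_{ik})\tilde t_{ip}$. In the binary case $\ell_{M_k}=\sum_i(-m_{ik}\nu_{ik}+\log(1+e^{\nu_{ik}}))$ with $\nu_{ik}$ linear in $\alpha_{pk}$ with coefficient $\tilde t_{ip}$, so $\frac{\partial\ell_{M_k}}{\partial\alpha_{pk}}=\sum_i\bigl(-m_{ik}+\frac{e^{\nu_{ik}}}{1+e^{\nu_{ik}}}\bigr)\tilde t_{ip}=\sum_i(\hat m_{ik}-m_{ik})\tilde t_{ip}$, using $\hat m_{ik}=e^{\nu_{ik}}/(1+e^{\nu_{ik}})$. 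Either way $\frac{\partial f}{\partial\alpha_{pk}}=\frac{w_k}{n}\sum_i \tilde t_{ip}(\hat m_{ik}-m_{ik})$, which is precisely the $(p,k)$ entry of $\frac1n\bTtilde'(\bMhat-\bM)\mathbf{W}$: the sum over $i$ of $\tilde t_{ip}(\hat m_{ik}-m_{ik})$ is the $(p,k)$ entry of $\bTtilde'(\bMhat-\bM)$, and right-multiplication by the diagonal $\mathbf{W}$ scales column $k$ by $w_k$. The identical argument with $x_{il}$ replacing $\tilde t_{ip}$ gives $\nabla_{\bxi}f=\frac1n\bX'(\bMhat-\bM)\mathbf{W}$.

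Then I would handle the outcome block analogously: $\frac{\partial f}{\partial\beta_k}=\frac{w_Y}{n}\frac{\partial\ell_Y}{\partial\beta_k}$, and in both the Gaussian and logistic cases $\frac{\partial\ell_Y}{\partial\beta_k}=\sum_i m_{ik}(\hat y_i-y_i)$ (the coefficient of $\beta_k$ in the linear predictor $z_i$, resp.\ $\hat y_i$, being $m_{ik}$), which is the $k$-th entry of $\frac{w_Y}{n}\bM'(\bYhat-\bY)$; replacing $m_{ik}$ by $\tilde t_{ip}$ gives the $\bgamma$ formula and by $x_{il}$ the $\bpsi$ formula. Assembling the scalar partials into the stated matrices is then just bookkeeping about which index labels rows and which labels columns. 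There is no real obstacle here: the only mild subtlety is to verify carefully that the $\mathbf{W}$ multiplication lands on the correct (column) side and that the intercept column of $\bTtilde$ correctly produces the $\alpha_{0k}$ and $\gamma_0$ partials, so I would state the index conventions explicitly before writing the matrix identities. I expect the write-up to be short, with the GLM derivative identity $\partial_{\eta}[-y\eta+\log(1+e^\eta)]=\sigma(\eta)-y$ doing essentially all the work in the binary case.
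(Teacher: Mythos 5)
Your proposal is correct and follows essentially the same route as the paper: compute $\partial f/\partial\alpha_{pk}$ and $\partial f/\partial\gamma_p$ directly, observe that the Gaussian and logistic losses both yield the ``(fitted $-$ observed) $\times$ covariate'' form, and obtain the remaining gradients by swapping $\bTtilde$ for $\bX$ or $\bM$. The only cosmetic difference is that you make the GLM derivative identity explicit up front, whereas the paper writes out the two cases in full; the content is identical.
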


\begin{proof}[Proof of Theorem 1]

   $\balpha$ and $\bxi$ play symmetric roles in the mediator models, whether the Gaussian or logistic model is chosen. It is therefore sufficient to prove the equalities for $\balpha$ and the same result holds for $\bxi$ by changing $\bTtilde$ into $\bX$. The same holds for $\bgamma$ on one hand and $\bbeta$ and $\bpsi$ on the other hand, by changing $\bTtilde$ into $\bM$ and $\bX$ respectively. Only the proofs for $\balpha$ and $\bgamma$ are therefore fully developed. Their adaptation to $\bbeta$, $\bxi$ and $\bpsi$ are straightforward.

  Consider $k$ such that $M_k$ is gaussian. Then, for every $p\in\{0,1\}$,
  \begin{align*}
    \frac{\partial f}{\partial \alpha_{pk}} & = \frac{1}{n} \frac{\partial \ell_{M_k}}{\partial \alpha_{pk}} \\
                                           & = \frac{1}{2n}  \frac{\partial}{\partial \alpha_{pk}} \left( \sum_{i=1}^n \left(\sum_{q=0}^1 \alpha_{qk} \tilde{t}_{iq} + \sum_{l=1}^L \xi_{lk} x_{il} - m_{ik}\right)^2 \right) \\
                                           & = \frac{1}{n}  \sum_{i=1}^n \tilde{t}_{ip} \left(\sum_{q=0}^1 \alpha_{qk} \tilde{t}_{iq} + \sum_{l=1}^L \xi_{lk} x_{il} - m_{ik}\right)\\
                                           &=  \frac{1}{n}  \sum_{i=1}^n \tilde{t}_{ip} (\hat{m}_{ik} - m_{ik})\\
                                           &= \frac{1}{n}  \big(   \bTtilde' (\bMhat-\bM)  \big)_{pk}. 
  \end{align*} 

  The same reasoning applies when $k$ is such that $M_k$ is binary:
   \begin{align*}
    \frac{\partial f}{\partial \alpha_{pk}} & = \frac{1}{n} \frac{\partial \ell_{M_k}}{\partial \alpha_{pk}} \\
                                           & = \frac{1}{n}  \frac{\partial}{\partial \alpha_{pk}} \left( \sum_{i=1}^n -m_{ik}\left( \sum_{q=0}^1 \alpha_{qk} \tilde{t}_{iq}  + \sum_{l=1}^L \xi_{lk} x_{il}\right)+ \log\left(1+e^{\sum_{q=0}^1 \alpha_{qk} \tilde{t}_{iq} + \sum_{l=1}^L \xi_{lk} x_{il}}\right) \right) \\
                                           & = \frac{1}{n}  \sum_{i=1}^n \left(-\tilde{t}_{ip} m_{ik} + \frac{\tilde{t}_{ip}e^{\sum_{q=0}^1 \alpha_{qk} \tilde{t}_{iq}+ \sum_{l=1}^L \xi_{lk} x_{il}}}{1+e^{\sum_{q=0}^1 \alpha_{qk} \tilde{t}_{iq}+ \sum_{l=1}^L \xi_{lk} x_{il}}}\right)\\
                                           &=  \frac{1}{n} \sum_{i=1}^n \tilde{t}_{ip} (\hat{m}_{ik} - m_{ik})\\
                                           &= \frac{1}{n}  \big(   \bTtilde' (\bMhat-\bM)  \big)_{pk}. 
  \end{align*} 

   The claim concerning $\nabla_{\balpha} f$ is therefore true.
   \smallskip
   
    Let us now consider $Y$ to be Gaussian. 
   For every $p\in\{0,1\}$,
   \begin{align*}
    \frac{\partial f}{\partial \gamma_{p}} & = \frac{w_Y}{n} \frac{\partial \ell_{Y}}{\partial \gamma_{p}} \\
                                           & = \frac{w_Y}{2n} \frac{\partial}{\partial \gamma_{p}} \left( \sum_{i=1}^n \left(\sum_{q=0}^1 \gamma_{q} \tilde{t}_{iq} + \sum_{k=1}^K \beta_k m_{ik} + \sum_{l=1}^L \psi_{l} x_{il}- y_{i}\right)^2 \right) \\
                                           & = \frac{w_Y}{n} \sum_{i=1}^n \tilde{t}_{ip} \left(\sum_{q=0}^1 \gamma_{q} \tilde{t}_{iq} + \sum_{k=1}^K \beta_k m_{ik} + \sum_{l=1}^L \psi_{l} x_{il}- y_{i}\right)\\
                                           &=  \frac{w_Y}{n} \sum_{i=1}^n \tilde{t}_{ip} (\hat{y}_{i} - y_{i})\\
                                           &= \frac{w_Y}{n}  \big(   \bTtilde' (\bYhat-\bY)  \big)_{p} .
  \end{align*} 
   
   In the case of a binary outcome,
  
  \begin{align*}
    \frac{\partial f}{\partial \gamma_{p}} & = \frac{w_Y}{n} \frac{\partial \ell_{Y}}{\partial \gamma_{p}} \\
                                   \begin{split}
                                           & = \frac{w_Y}{n}  \frac{\partial}{\partial \gamma_{p}}  \left(\sum_{i=1}^n -y_{i}\left( \sum_{q=0}^1 \gamma_{q} \tilde{t}_{iq} + \sum_{l=1}^K \beta_l m_{il} + \sum_{l=1}^L \psi_{l} x_{il}\right)+\right.\\
                                             & \qquad + \left. \log\left(1+e^{\sum_{q=0}^1 \gamma_{q} \tilde{t}_{iq} + \sum_{l=1}^K \beta_l m_{il}+ \sum_{l=1}^L \psi_{l} x_{il}}\right) \right)
                                    \end{split}
                                    \\
                                             & =
                                             \frac{w_Y}{n} \sum_{i=1}^n \left(-y_{i} \tilde{t}_{ip} + \frac{\tilde{t}_{ip}e^{\sum_{q=0}^1 \gamma_{q} \tilde{t}_{iq}+\sum_{l=1}^K \beta_l m_{il}+ \sum_{l=1}^L \psi_{l} x_{il}}}{1+e^{\sum_{q=0}^1 \gamma_{q} \tilde{t}_{iq}+\sum_{l=1}^K \beta_l m_{il}+ \sum_{l=1}^L \psi_{l} x_{il}}}\right)\\
                                           &=  \frac{w_Y}{n} \sum_{i=1}^n \tilde{t}_{ip} (\hat{y}_{i} - y_{i})\\
                                           &= \frac{w_Y}{n}  \big(   \bTtilde' (\bYhat-\bY)  \big)_{p} .
  \end{align*} 

  The claims on 
  $\nabla_{\bgamma} f$ are therefore true in both cases.  
\end{proof}

\section{Additional simulation results}\label{sec:supp_figures}

\begin{figure}[htp]
\vspace{-.5cm}
\centering
\includegraphics[height=0.24\textwidth,width=0.3\textwidth]{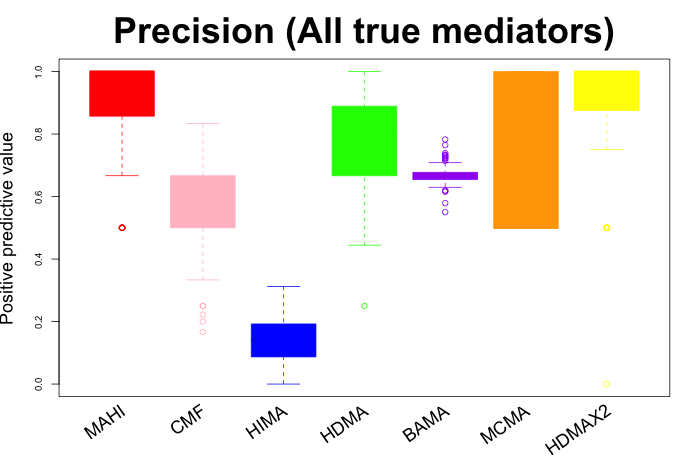}
\includegraphics[height=0.24\textwidth,width=0.3\textwidth]{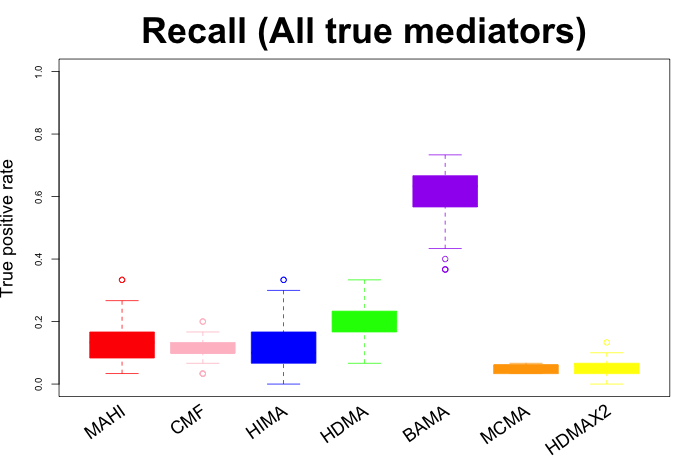}
\includegraphics[height=0.24\textwidth,width=0.3\textwidth]{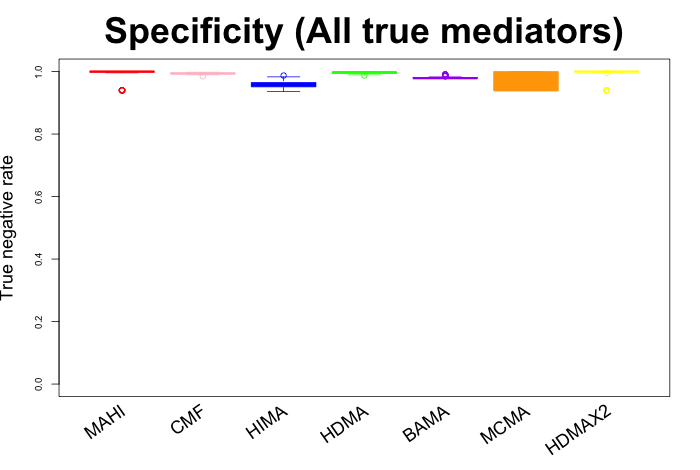}
\includegraphics[height=0.24\textwidth,width=0.3\textwidth]{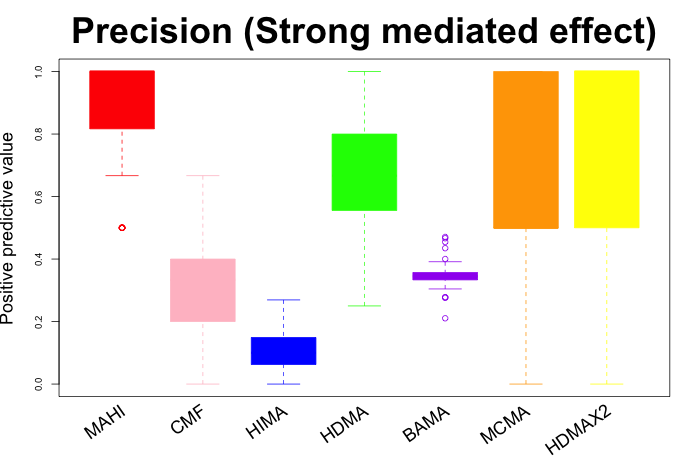}
\includegraphics[height=0.24\textwidth,width=0.3\textwidth]{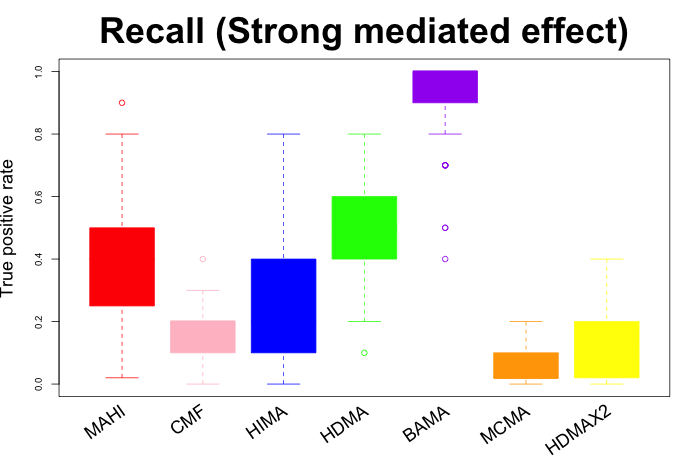}
\includegraphics[height=0.24\textwidth,width=0.3\textwidth]{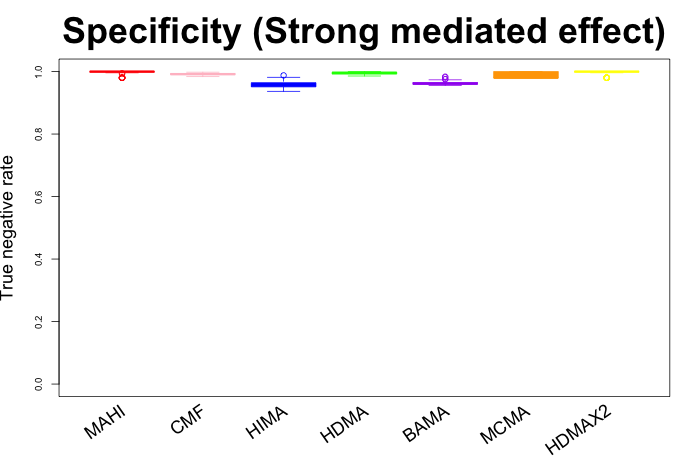}
\includegraphics[height=0.24\textwidth,width=0.3\textwidth]{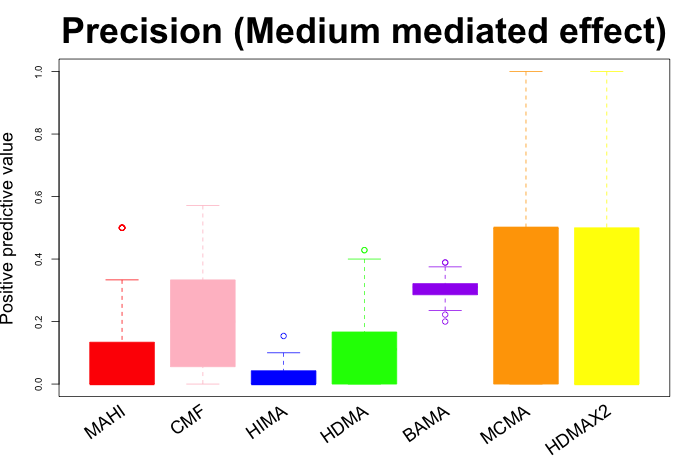}
\includegraphics[height=0.24\textwidth,width=0.3\textwidth]{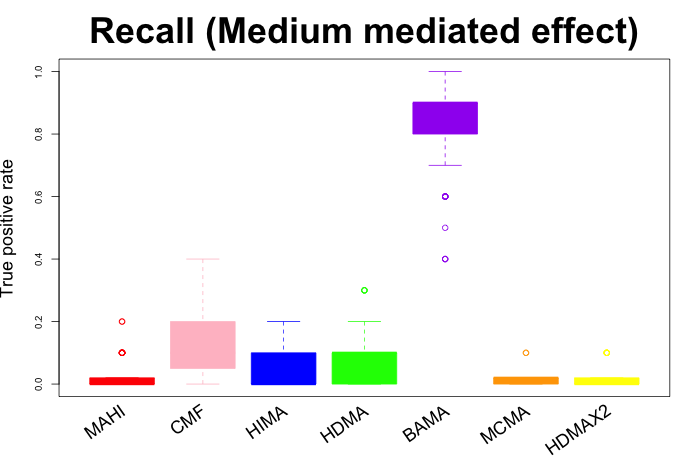}
\includegraphics[height=0.24\textwidth,width=0.3\textwidth]{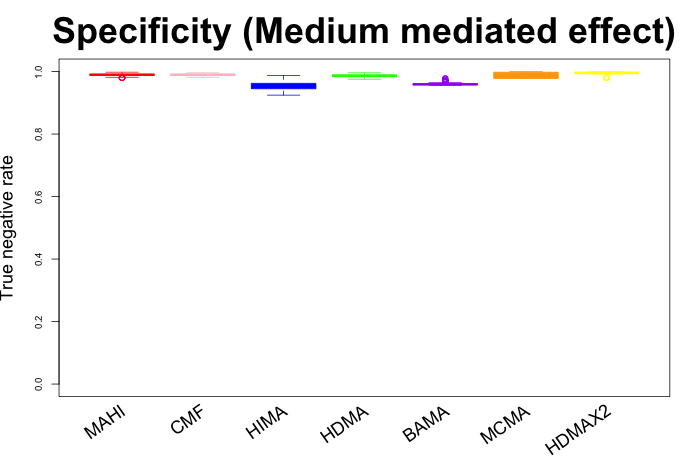}
\includegraphics[height=0.24\textwidth,width=0.3\textwidth]{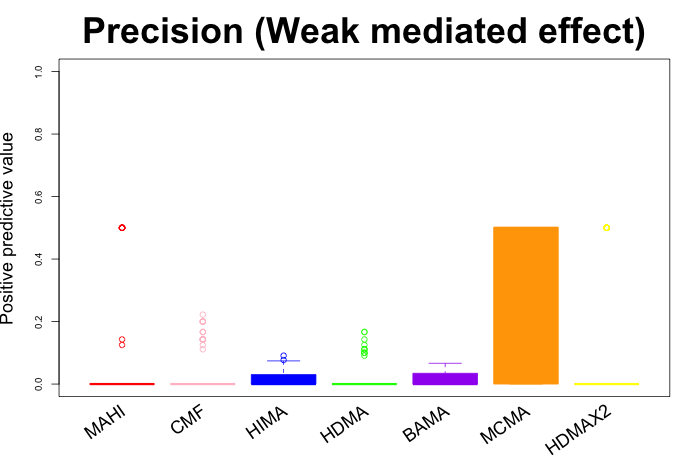}
\includegraphics[height=0.24\textwidth,width=0.3\textwidth]{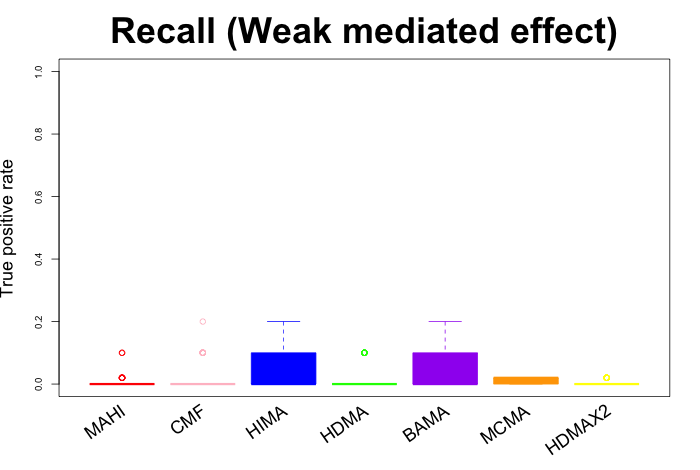}
\includegraphics[height=0.24\textwidth,width=0.3\textwidth]{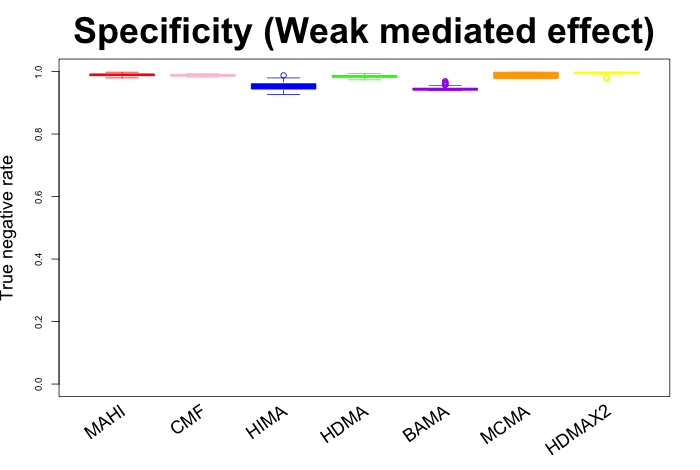}
\caption{Comparison of high-dimensional mediation analysis methods with regards to the ability to select the \textit{true} mediators $M_1,\ldots,M_{30}$. The results are displayed in the form of boxplots showing the distribution over 100 replicates simulated with model \eqref{grand_simed} for \textbf{continuous} outcomes. Variables $M_1,\ldots,M_{10}$ are \textit{strong} mediators, $M_{11},\ldots,M_{20}$ \textit{mild} mediators with \textit{medium} mediated effects, and $M_{21},\ldots,M_{30}$ \textit{weak} mediators. All the candidate mediators are \textbf{independent}. }\label{res_true_normal} 
\end{figure}

\begin{figure}[htp]
\vspace{-.5cm}
\centering
\includegraphics[height=0.24\textwidth,width=0.3\textwidth]{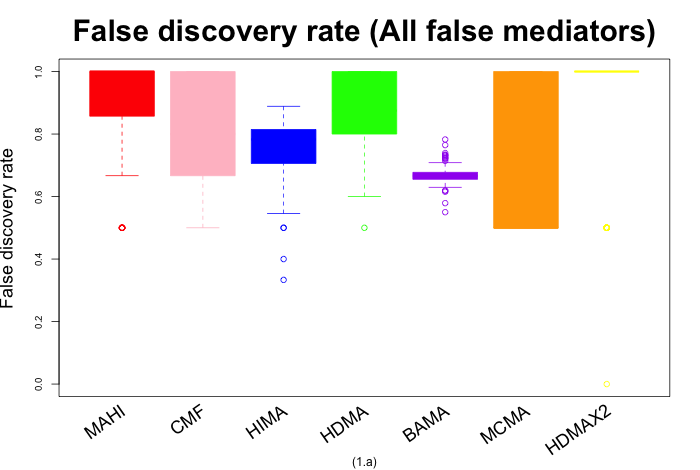}
\includegraphics[height=0.24\textwidth,width=0.3\textwidth]{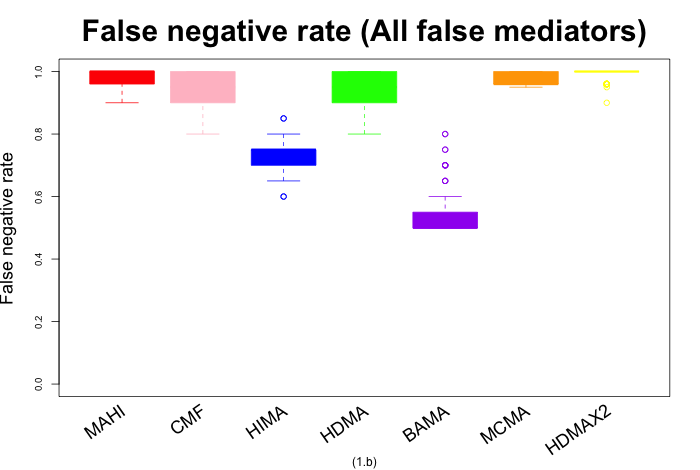}
\includegraphics[height=0.24\textwidth,width=0.3\textwidth]{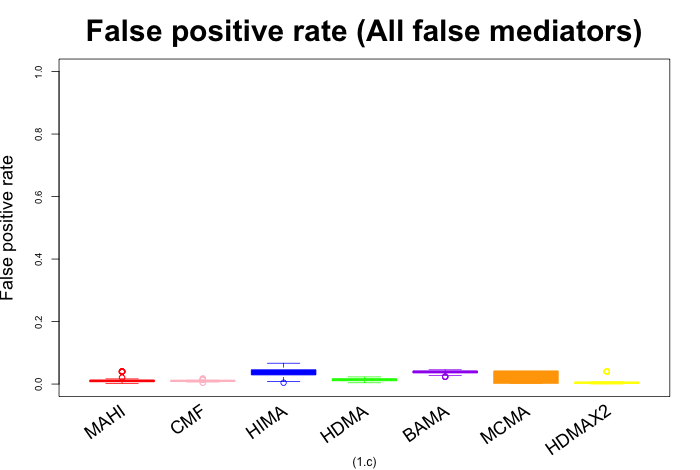}
\includegraphics[height=0.24\textwidth,width=0.3\textwidth]{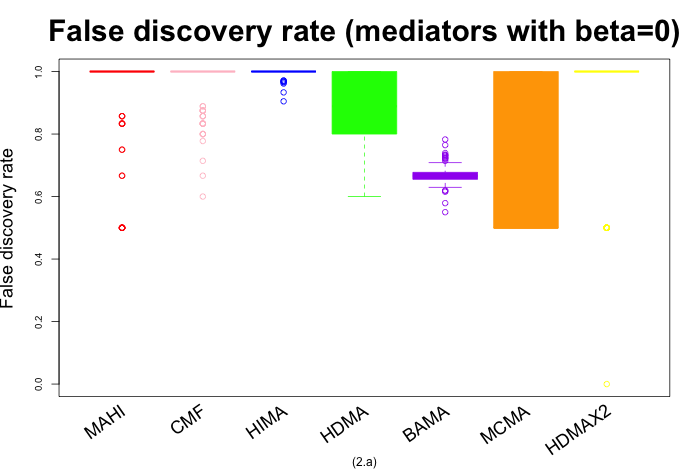}
\includegraphics[height=0.24\textwidth,width=0.3\textwidth]{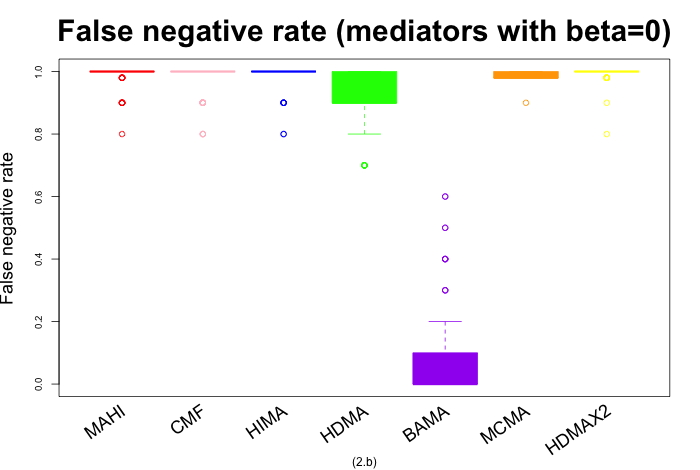}
\includegraphics[height=0.24\textwidth,width=0.3\textwidth]{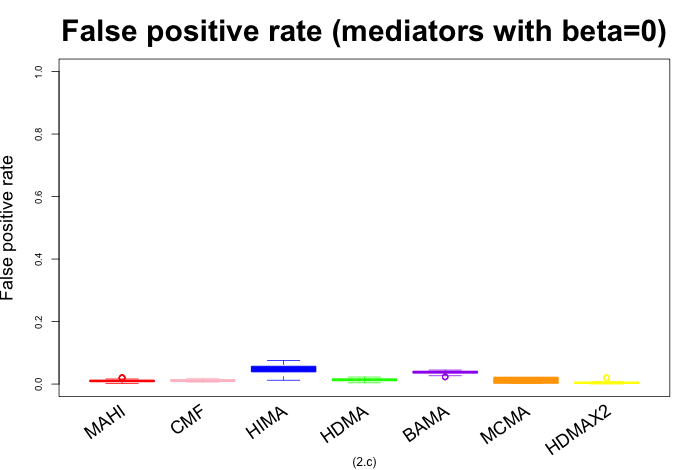}
\includegraphics[height=0.24\textwidth,width=0.3\textwidth]{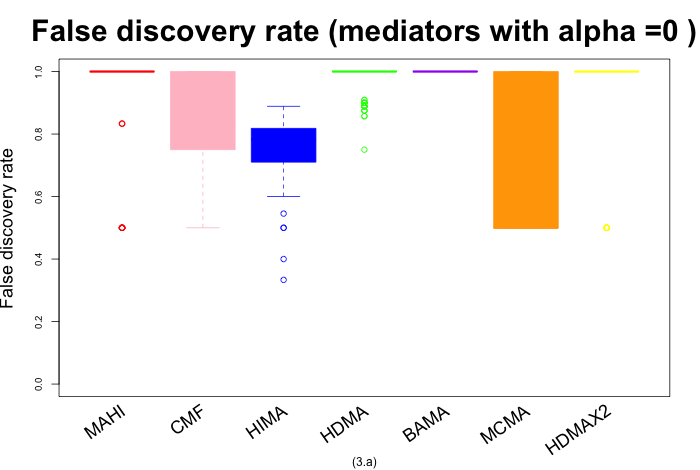}
\includegraphics[height=0.24\textwidth,width=0.3\textwidth]{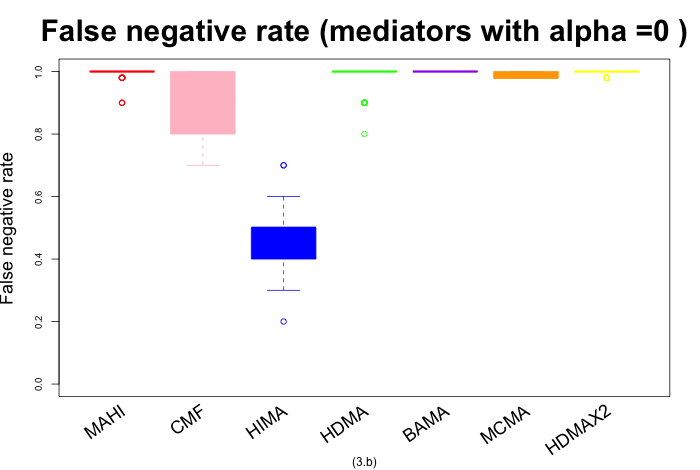}
\includegraphics[height=0.24\textwidth,width=0.3\textwidth]{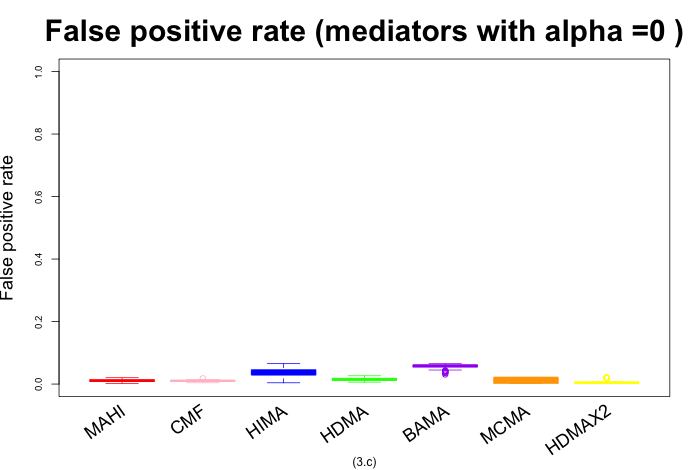}
\caption{Comparison of high-dimensional mediation analysis methods with regards to the selection of \textit{false} mediators (variables $M_{31},\ldots,M_{50}$). The results are displayed in the form of boxplots showing the distribution over 100 replicates simulated with model \eqref{grand_simed} for \textbf{continuous} outcomes. All the mediators are \textbf{independent}.}\label{res_fake_normal} 
\end{figure}

\begin{figure}[htp]
\vspace{-.5cm}
\centering
\includegraphics[height=0.24\textwidth,width=0.3\textwidth]{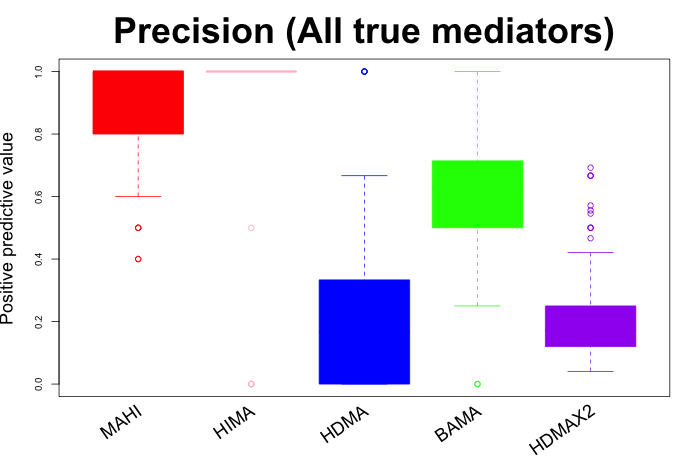}
\includegraphics[height=0.24\textwidth,width=0.3\textwidth]{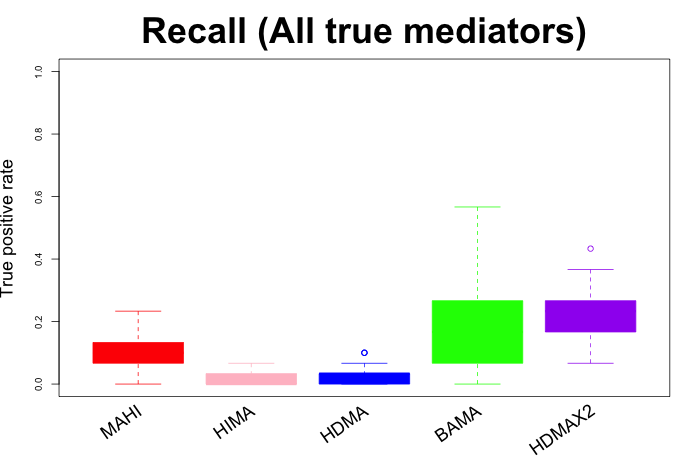}
\includegraphics[height=0.24\textwidth,width=0.3\textwidth]{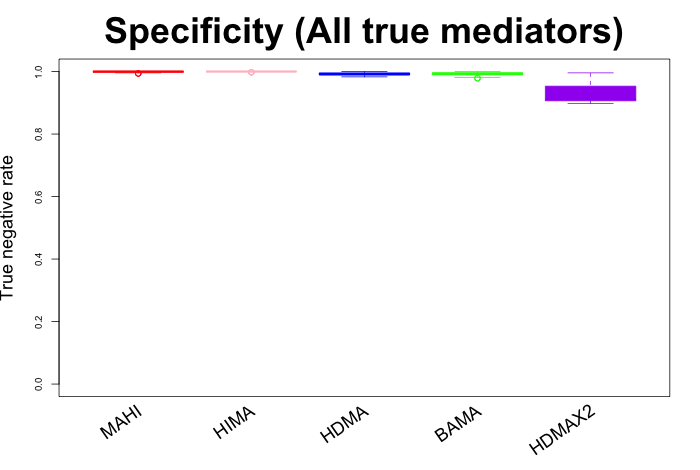}
\includegraphics[height=0.24\textwidth,width=0.3\textwidth]{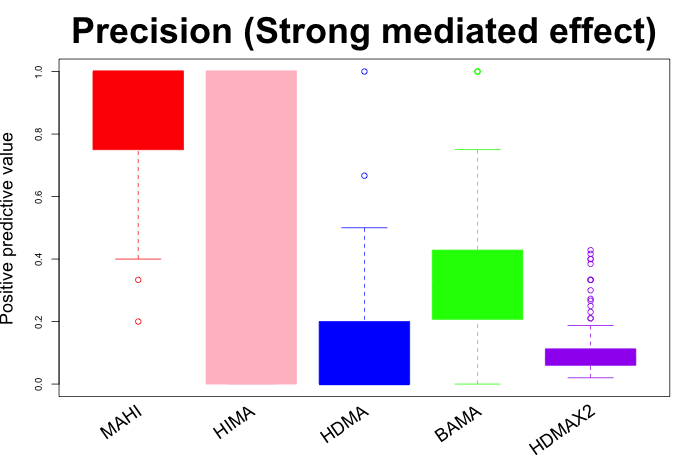}
\includegraphics[height=0.24\textwidth,width=0.3\textwidth]{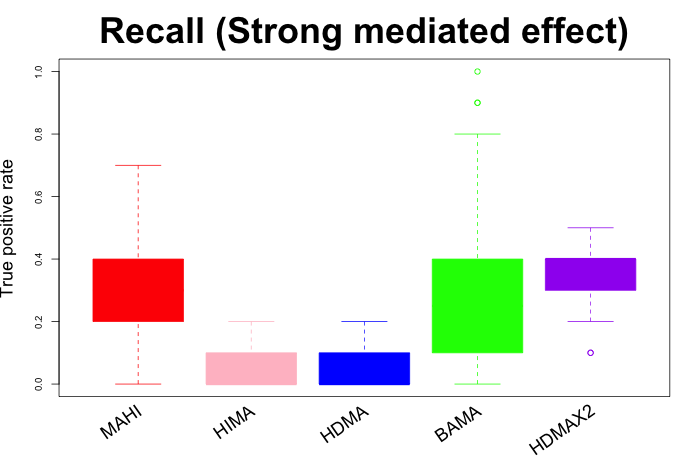}
\includegraphics[height=0.24\textwidth,width=0.3\textwidth]{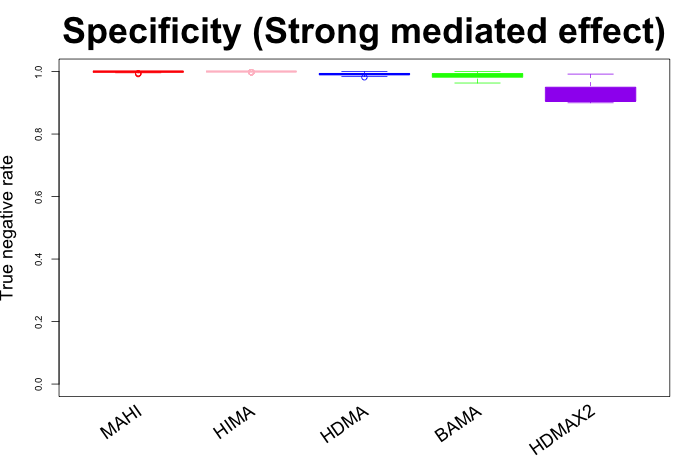}
\includegraphics[height=0.24\textwidth,width=0.3\textwidth]{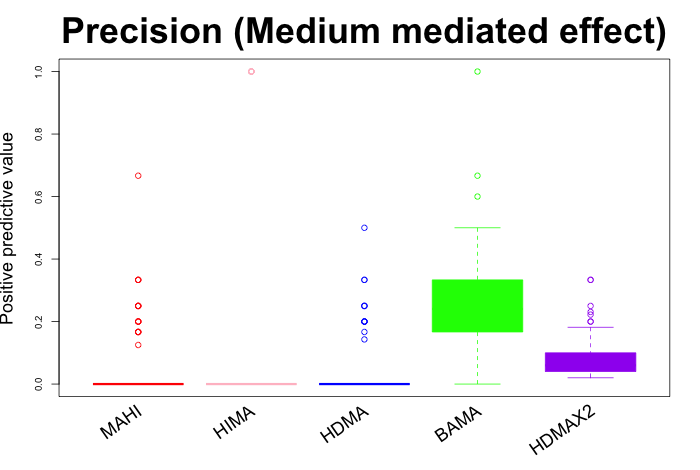}
\includegraphics[height=0.24\textwidth,width=0.3\textwidth]{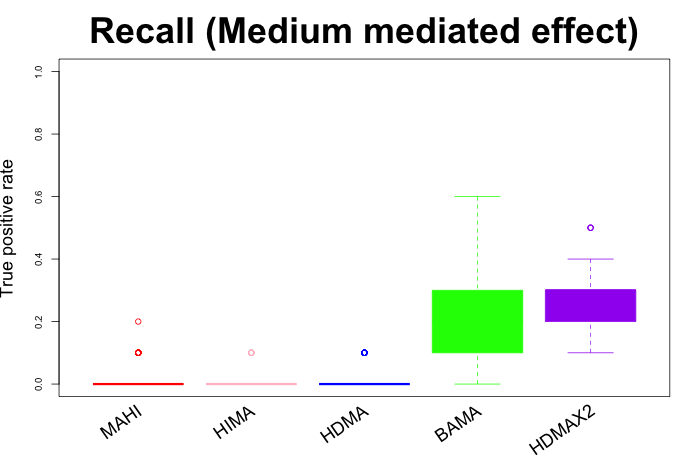}
\includegraphics[height=0.24\textwidth,width=0.3\textwidth]{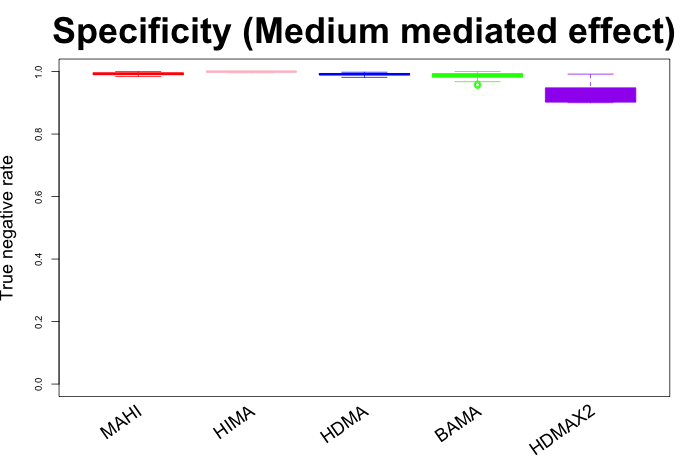}
\includegraphics[height=0.24\textwidth,width=0.3\textwidth]{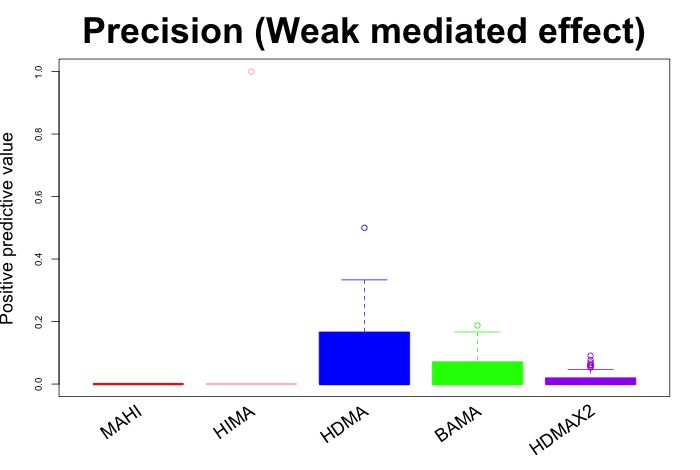}
\includegraphics[height=0.24\textwidth,width=0.3\textwidth]{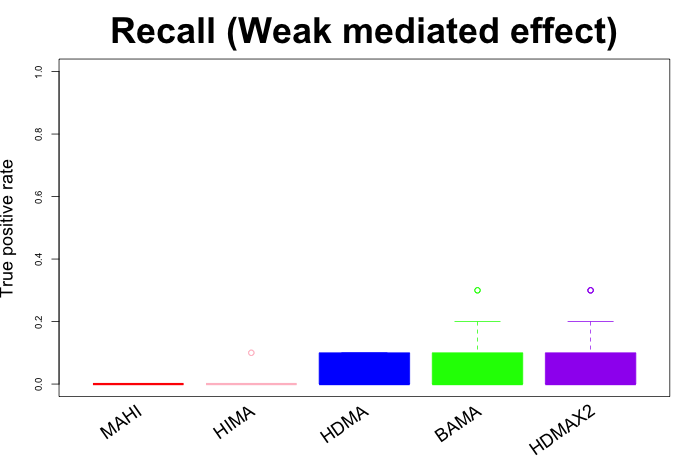}
\includegraphics[height=0.24\textwidth,width=0.3\textwidth]{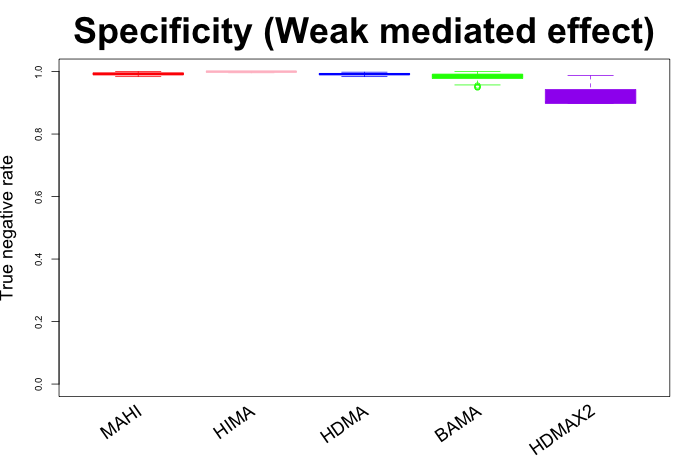}
\caption{Comparison of high-dimensional mediation analysis methods with regards to the ability to select the \textit{true} mediators $M_1,\ldots,M_{30}$. The results are displayed in the form of boxplots showing the distribution over 100 replicates simulated with model \eqref{grand_simed} for \textbf{continuous} outcomes including covariates. Variables $M_1,\ldots,M_{10}$ are \textit{strong} mediators, $M_{11},\ldots,M_{20}$ \textit{mild} mediators with \textit{medium} mediated effects, and $M_{21},\ldots,M_{30}$ \textit{weak} mediators. All the candidate mediators are \textbf{correlated}. }\label{res_true_normal_corr} 
\end{figure}

\begin{figure}[htp]
\vspace{-.5cm}
\centering
\includegraphics[height=0.24\textwidth,width=0.3\textwidth]{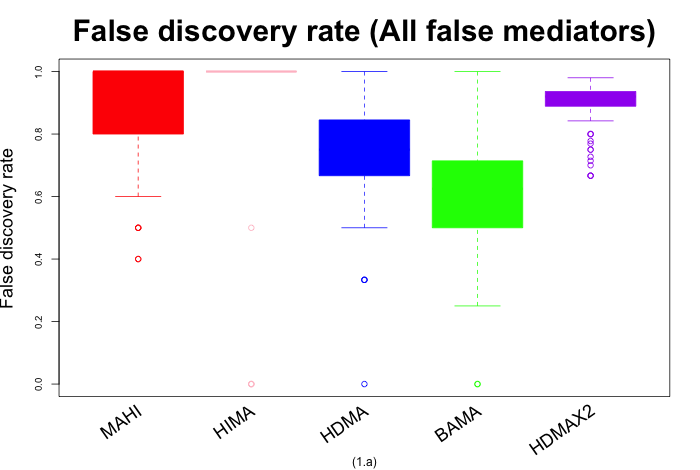}
\includegraphics[height=0.24\textwidth,width=0.3\textwidth]{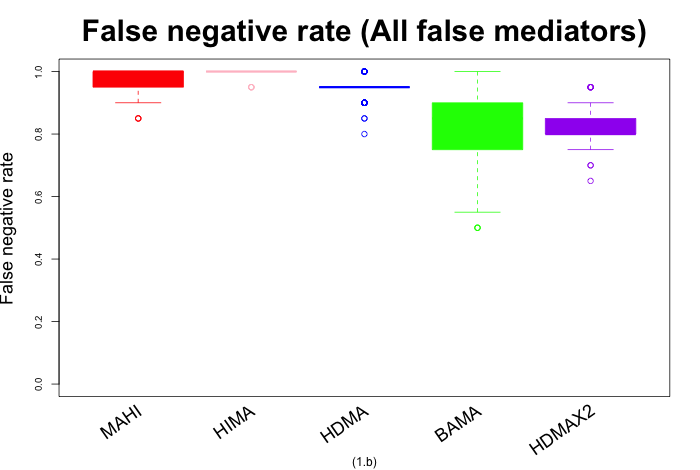}
\includegraphics[height=0.24\textwidth,width=0.3\textwidth]{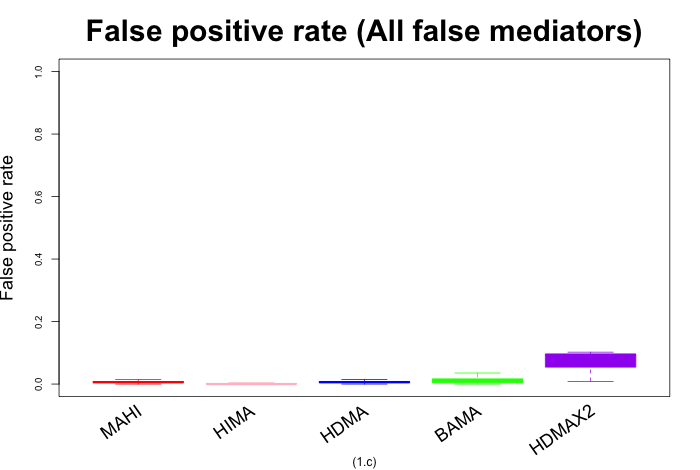}
\includegraphics[height=0.24\textwidth,width=0.3\textwidth]{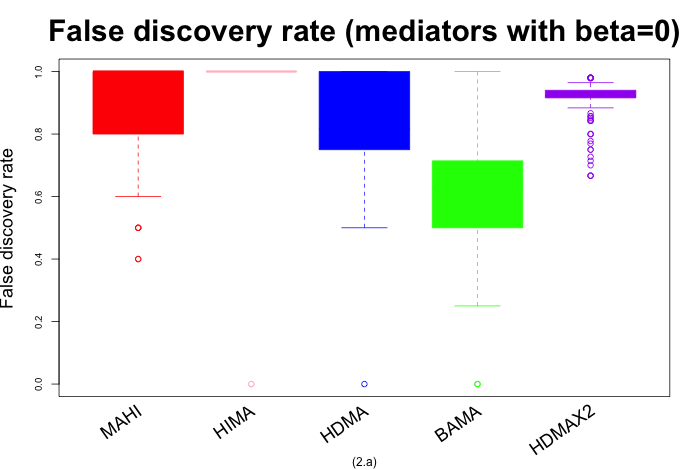}
\includegraphics[height=0.24\textwidth,width=0.3\textwidth]{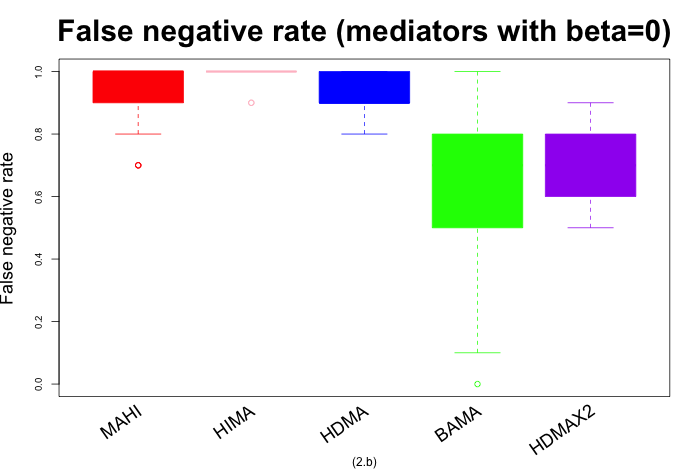}
\includegraphics[height=0.24\textwidth,width=0.3\textwidth]{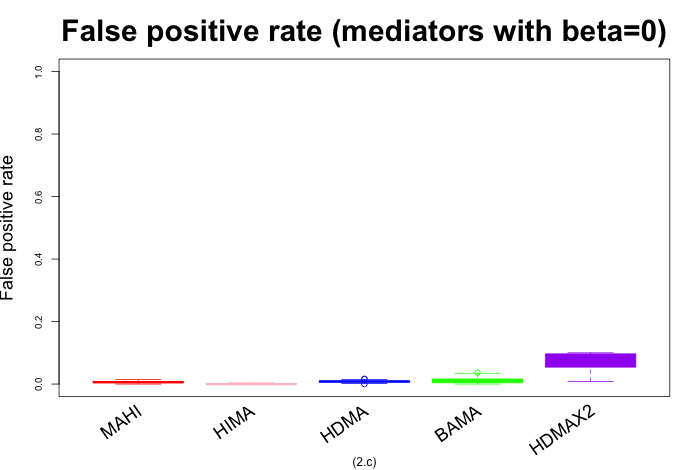}
\includegraphics[height=0.24\textwidth,width=0.3\textwidth]{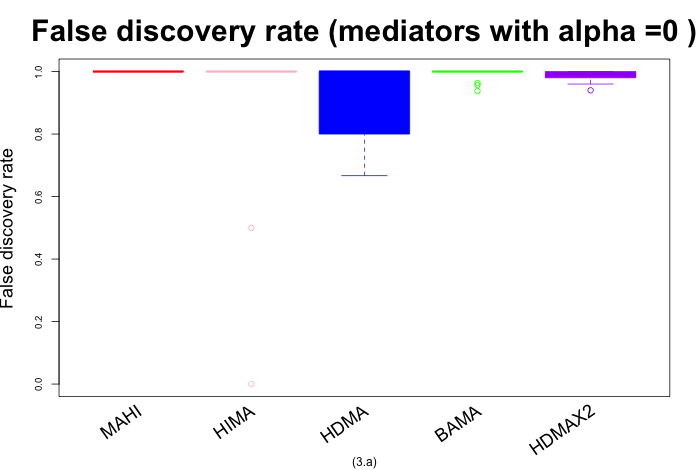}
\includegraphics[height=0.24\textwidth,width=0.3\textwidth]{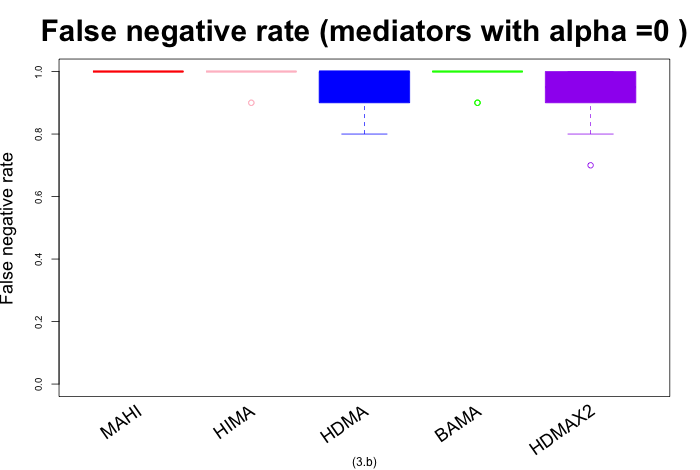}
\includegraphics[height=0.24\textwidth,width=0.3\textwidth]{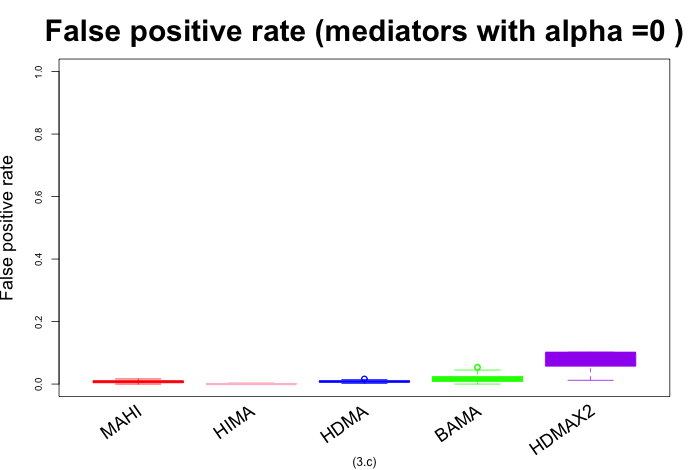}
\caption{Comparison of high-dimensional mediation analysis methods with regards to the selection of \textit{false} mediators (variables $M_{31},\ldots,M_{50}$). The results are displayed in the form of boxplots showing the distribution over 100 replicates simulated with model \eqref{grand_simed} for \textbf{continuous} outcomes. All the mediators are \textbf{correlated}.}\label{res_fake_normal_corr} 
\end{figure}

\begin{figure}[htp]
\vspace{-.5cm}
\centering
\includegraphics[height=0.24\textwidth,width=0.3\textwidth]{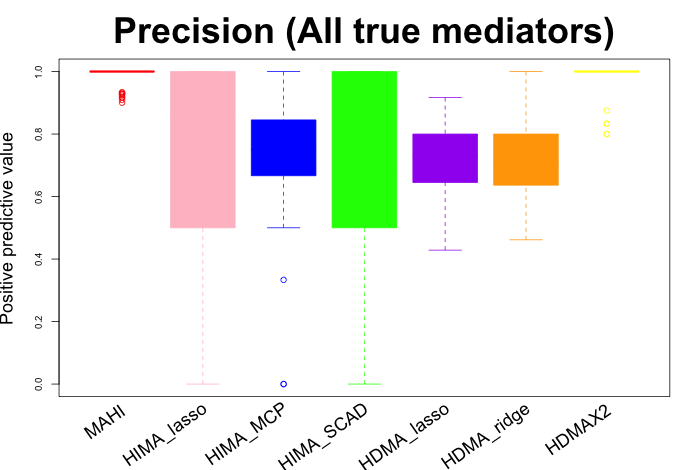}
\includegraphics[height=0.24\textwidth,width=0.3\textwidth]{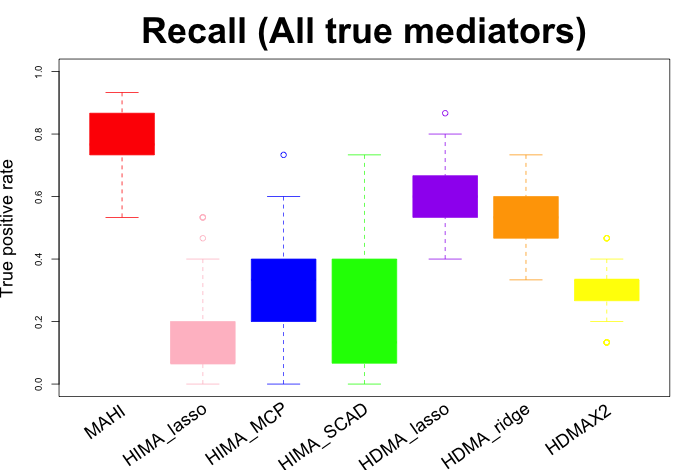}
\includegraphics[height=0.24\textwidth,width=0.3\textwidth]{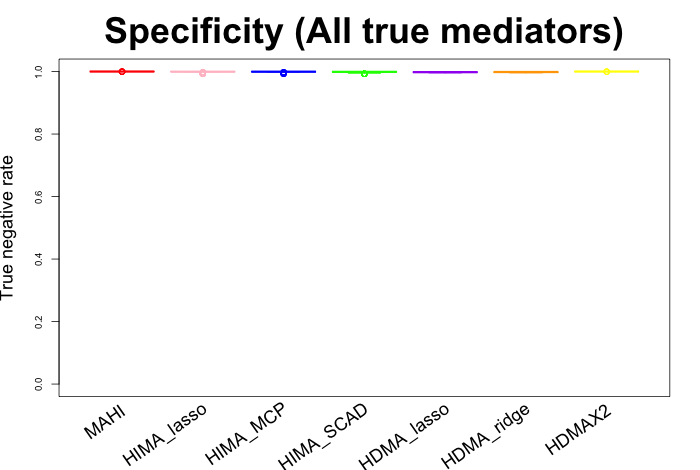}
\includegraphics[height=0.24\textwidth,width=0.3\textwidth]{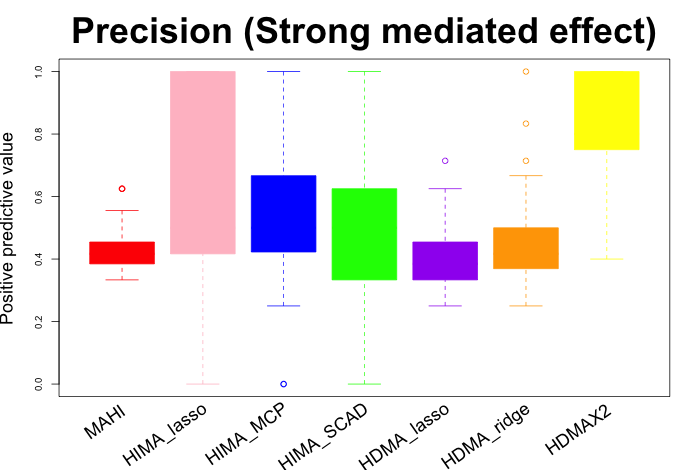}
\includegraphics[height=0.24\textwidth,width=0.3\textwidth]{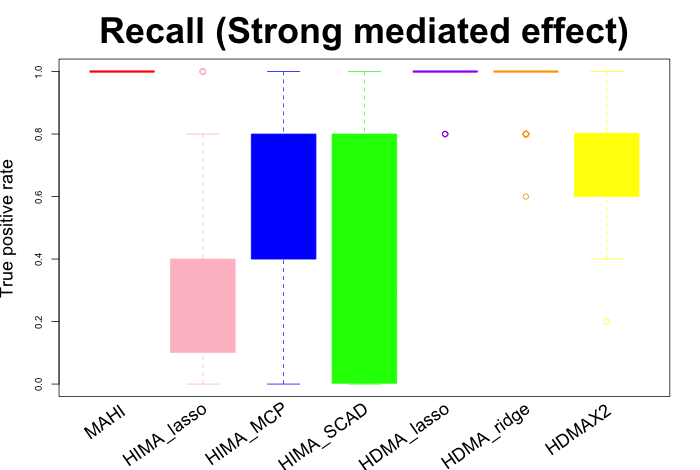}
\includegraphics[height=0.24\textwidth,width=0.3\textwidth]{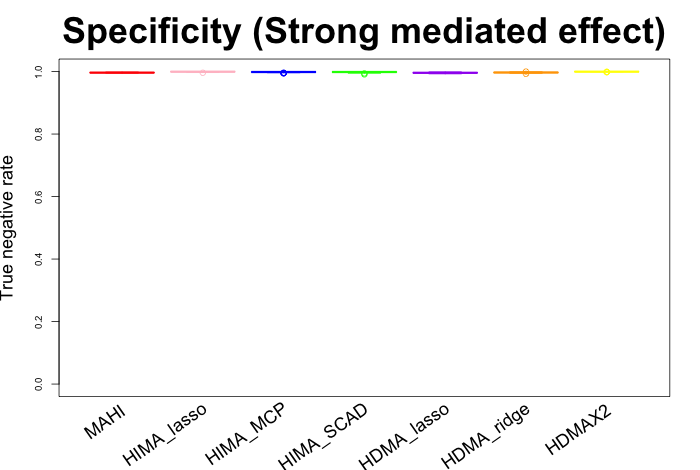}
\includegraphics[height=0.24\textwidth,width=0.3\textwidth]{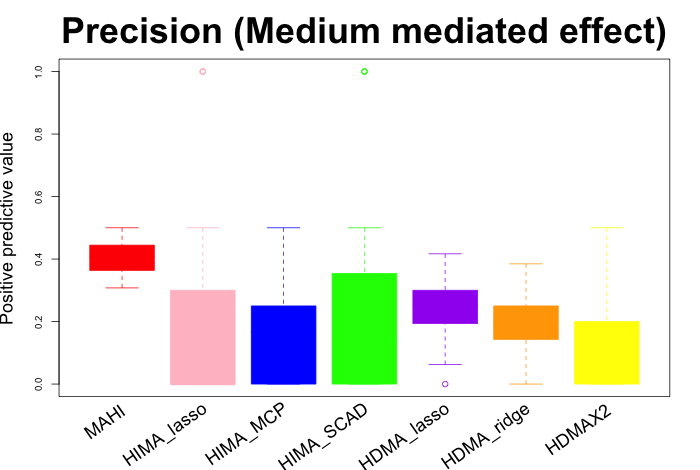}
\includegraphics[height=0.24\textwidth,width=0.3\textwidth]{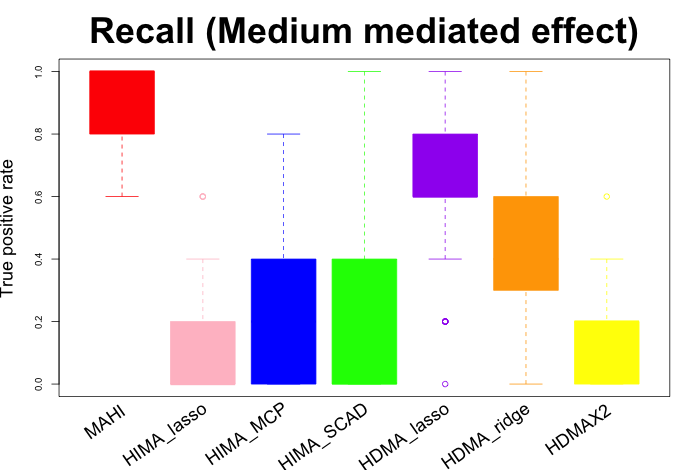}
\includegraphics[height=0.24\textwidth,width=0.3\textwidth]{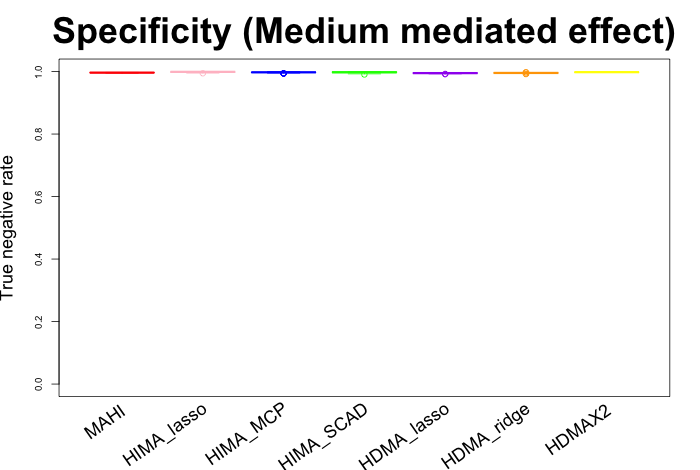}
\includegraphics[height=0.24\textwidth,width=0.3\textwidth]{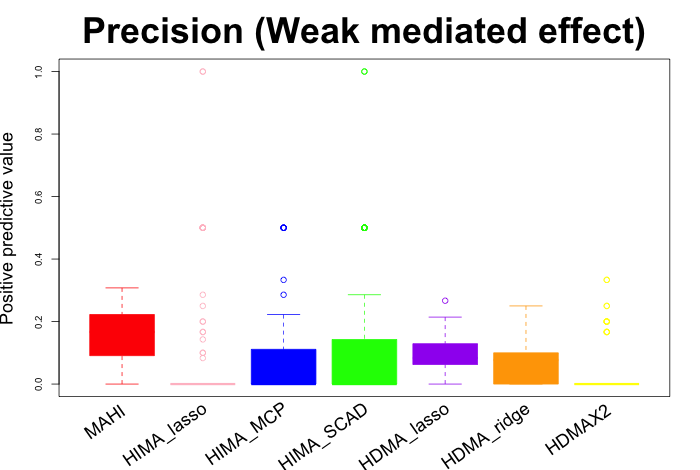}
\includegraphics[height=0.24\textwidth,width=0.3\textwidth]{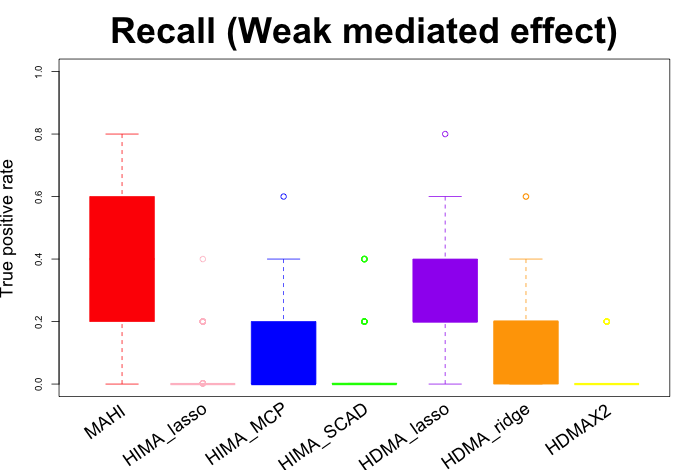}
\includegraphics[height=0.24\textwidth,width=0.3\textwidth]{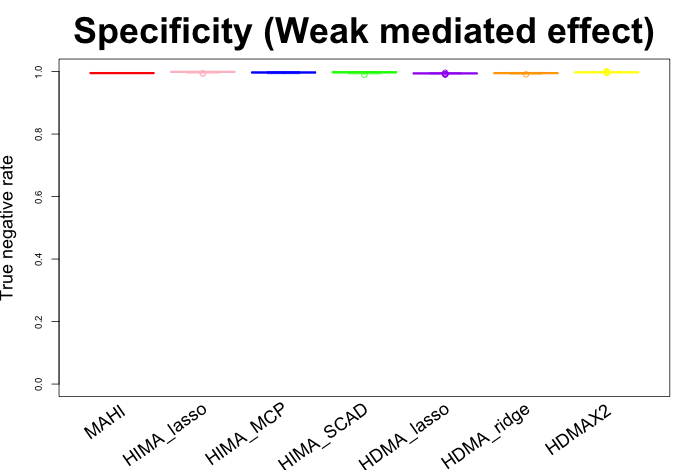}
\caption{Comparison of high-dimensional mediation analysis methods with regards to the ability to select \textit{true} mediators (variables $M_1,\ldots,M_{15}$). The results are displayed in the form of boxplots showing the distribution over 100 replicates with \textbf{binary} outcomes simulated with model \eqref{bin_grand_simed}. }\label{res_true_binary} 
\end{figure}

\begin{figure}[htp]
\vspace{-.5cm}
\centering
\includegraphics[height=0.24\textwidth,width=0.3\textwidth]{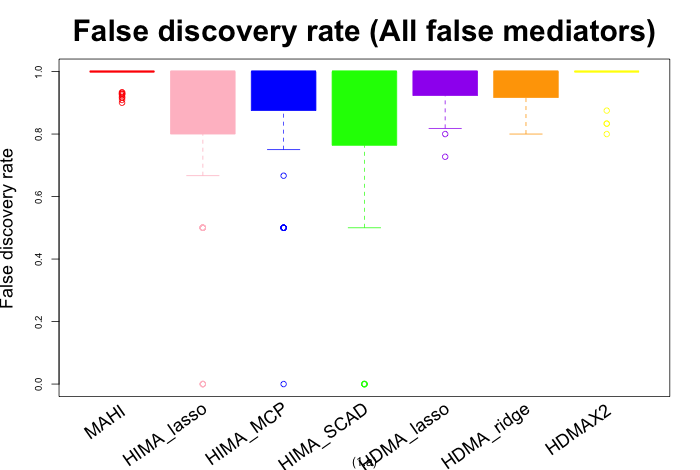}
\includegraphics[height=0.24\textwidth,width=0.3\textwidth]{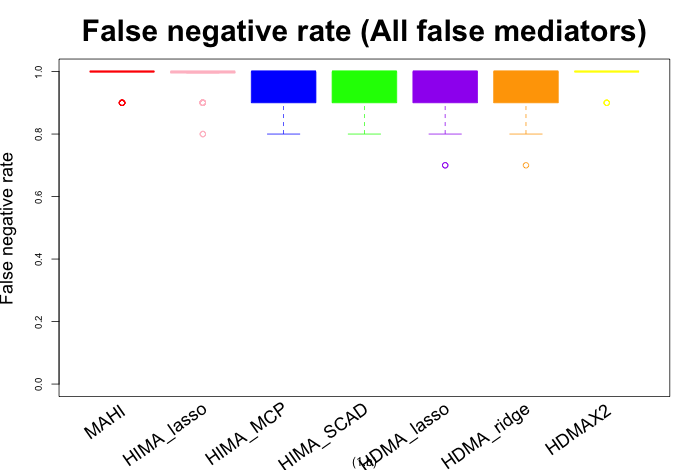}
\includegraphics[height=0.24\textwidth,width=0.3\textwidth]{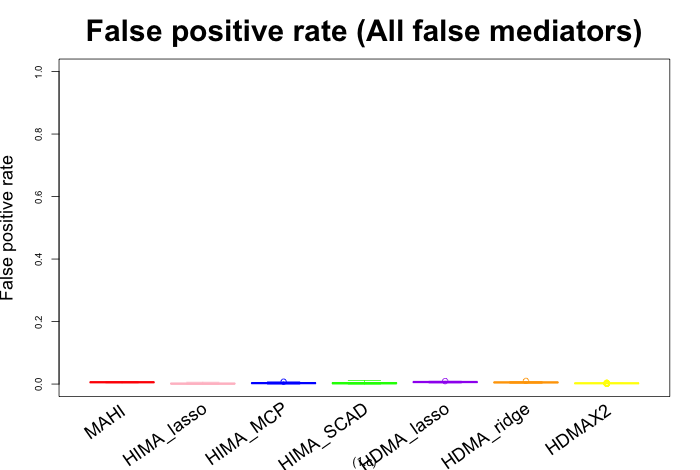}
\includegraphics[height=0.24\textwidth,width=0.3\textwidth]{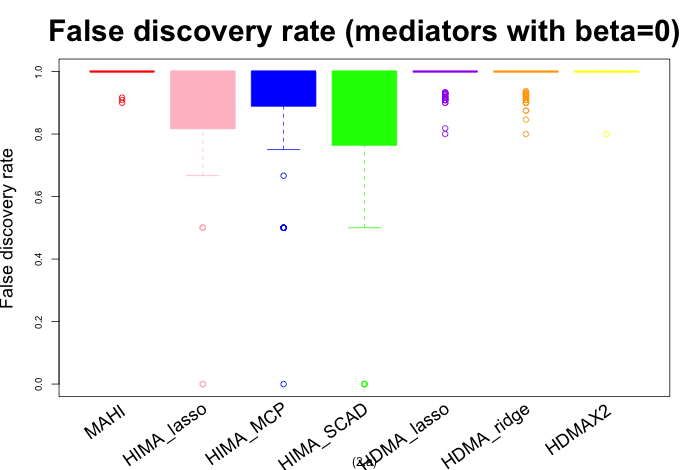}
\includegraphics[height=0.24\textwidth,width=0.3\textwidth]{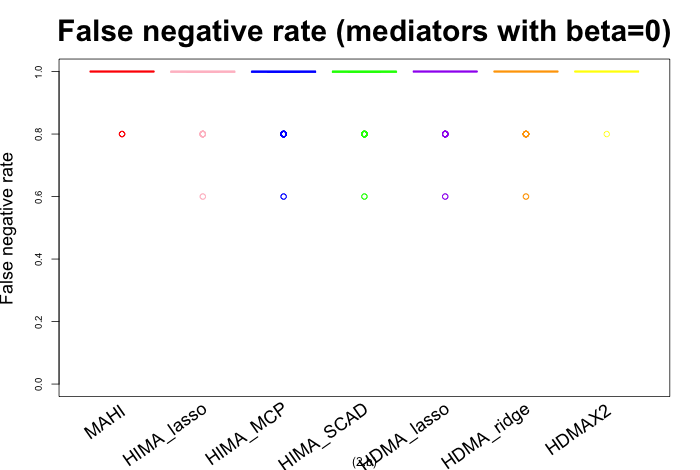}
\includegraphics[height=0.24\textwidth,width=0.3\textwidth]{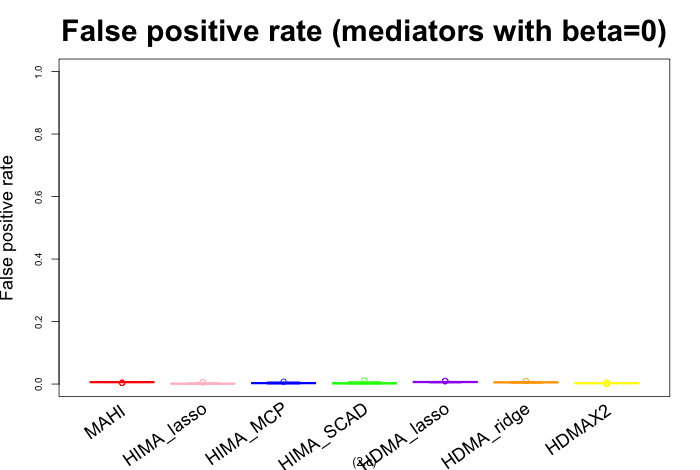}
\includegraphics[height=0.24\textwidth,width=0.3\textwidth]{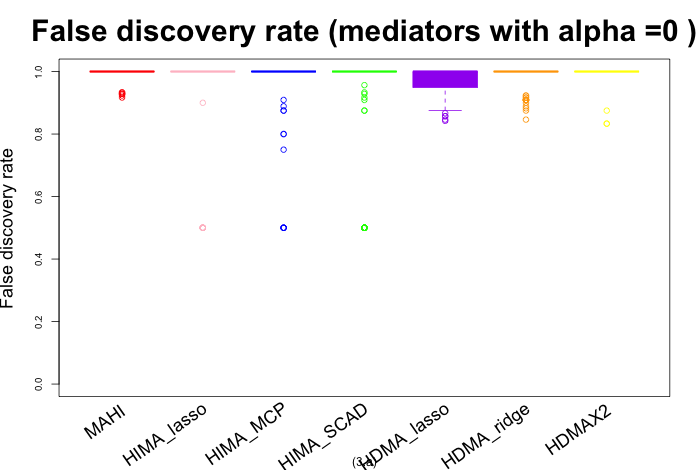}
\includegraphics[height=0.24\textwidth,width=0.3\textwidth]{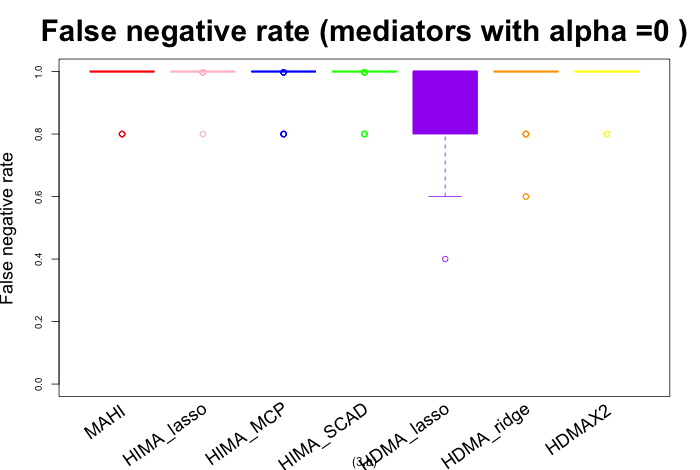}
\includegraphics[height=0.24\textwidth,width=0.3\textwidth]{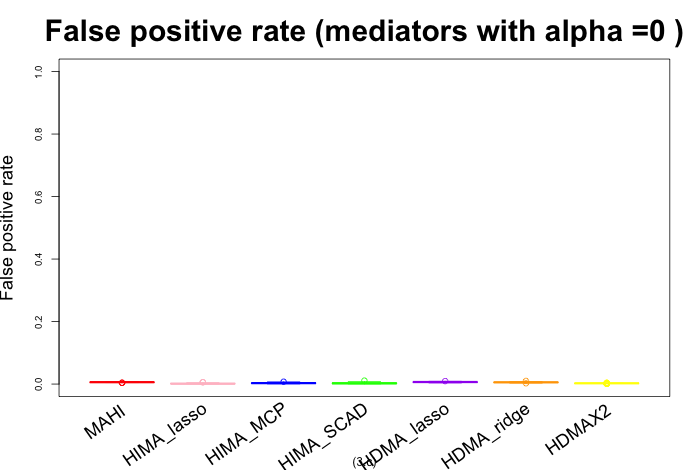}
\caption{Comparison of high-dimensional mediation analysis methods with regards to the selection of \textit{false} mediators $M_{16},\ldots,M_{25}$. The results are displayed in the form of boxplots showing the distribution over 100 replicates with \textbf{binary} outcomes simulated with model \eqref{bin_grand_simed}. }\label{res_fake_binary} 
\end{figure}

\begin{table}[htp]
\centering
\begin{tabular}{|c|c|c|c|c|c|c|c|}
\hline
\makecell{\textbf{Group of} \\ \textbf{candidate} \\ \textbf{mediators}} & \makecell{\textbf{Average} \\ $\delta^k$} & \makecell{\textbf{Average} \\  $|\delta^k|$} & \textbf{Step 1} & \textbf{Step 2} & \makecell{\textbf{Average} \\ $\hat{\delta}_k$} & \textbf{Bias} & \makecell{\textbf{Coverage} \\ \textbf{proportion}} \\
\hline
\multicolumn{8}{|c|}{\textbf{Independent candidate mediators, model without covariates}}\\
\hline
\makecell{10 \text{strong} \\ \text{true mediators}} & 57.60 & 120.6 & 100 & 38.7 & 55.57 & -2.03 & 0.95 \\
\hline
\makecell{10 \text{mild} \\ \text{true mediators}} & -3 & 5.60 & 57.3 & 1.8 & -2.36 & 0.51 & 0.97 \\
\hline
\makecell{10 \text{weak} \\ \text{true mediators}} & 0 & 0.25 & 24.3 & 0.2 & 0.07 & 0.03 & 0.92 \\
\hline
\makecell{5 false mediators \\ $(\alpha_{1,k},\beta_k)=(20,0)$} & 0 & 0 & 100 & 1.4 & 0.63 & 0.63 & 0.94 \\
\hline
\makecell{5 false mediators \\ $(\alpha_{1,k},\beta_k)=(4,0)$} & 0 & 0 & 93.4 & 1.2 & 0.99 & 0.99 & 0.94 \\
\hline
\makecell{5 false mediators \\$(\alpha_{1,k},\beta_k)=(0,20)$} & 0 & 0 & 33.2 & 0.6 & 0.88 & 0.88 & 0.96 \\
\hline
\makecell{5 false mediators \\$(\alpha_{1,k},\beta_k)=(0,4)$} & 0 & 0 & 0.8 & 0 & 0.03 & 0.03 & 1 \\
\hline
\makecell{450 false mediators \\$(\alpha_{1,k},\beta_k)=(0,0)$} & 0 & 0 & 0.10 & 0 & 0.52 & 0.52 & 0.99 \\
\hline
\multicolumn{8}{|c|}{\textbf{Correlated candidate mediators, model with covariates}} \\
\hline 
\makecell{10 \text{strong} \\ \text{true mediators}} & 57.6 & 120.6 & 95.4 & 30.6 & 59.30 & 1.70 & 0.82 \\
\hline
\makecell{10 \text{mild} \\ \text{true mediators}} & -3 & 5.6 & 73.2 & 1.8 & -2.64 & 0.36 & 0.94  \\
\hline
\makecell{10 \text{weak} \\ \text{true mediators}} & 0 & 0.25 & 64.1 & 0 & 0.40 & 0.40 & 0.99 \\
\hline
\makecell{5 false mediators \\ $(\alpha_{1,k},\beta_k)=(20,0)$} & 0 & 0 & 87.2 & 8.4 & 103.25 & 103.25 & 0.78 \\
\hline
\makecell{5 false mediators \\ $(\alpha_{1,k},\beta_k)=(4,0)$} & 0 & 0 & 31.4 & 0.8 & 5.70 & 5.70 & 0.92 \\
\hline
\makecell{5 false mediators \\ $(\alpha_{1,k},\beta_k)=(0,20)$} & 0 & 0 & 8.6 & 0 & -1.69 & -1.69 & 0.96 \\
\hline
\makecell{5 false mediators \\ $(\alpha_{1,k},\beta_k)=(0,4)$} & 0 & 0 & 3.6 & 0 & -0.03 & -0.03 & 1 \\
\hline
\makecell{450 false mediators \\ $(\alpha_{1,k},\beta_k)=(0,0)$} & 0 & 0 & 0.04 & 0 & 1.72 & 1.72 & 1 \\
\hline
\end{tabular}
\caption{Additional results for the proposed MAHI method under the same simulation setting as in Table \ref{tab_true_normal}. For each candidate mediator $M_k$, the true indirect effect $\delta^k$ was calculated by simulating a very large population according to model (\ref{grand_simed}) with either independent candidate mediators and no covariates ($\xi_{1k} = \xi_{2k} = \psi_{1k} = \psi_{2k} = 0$ for all $k$) or correlated candidate mediators and covariates. The parameters $\delta^k$ and $|\delta^k|$ were then averaged over all candidate mediators within each group, where groups are defined by the type of pairs $(\alpha_{1k}, \beta_k)$ as in Table \ref{grand_param}. For the remaining columns, 100 replicates of size $n = 100$ were generated. For each $M_k$ we considered : the number of times it was selected by Step 1 and Step 2 across 100 replicates; the mean estimate of the indirect effect and its bias across 100 replicates; the proportion of times the 95\% confidence interval contained the true value of $\delta^k$ over 100 replicates. These values were subsequently averaged over all candidate mediators within each group. }
\label{tab:CP_ind}
\end{table}

\end{appendices}

\end{document}